\documentclass[10pt]{article}
\usepackage[left=1.2in, right=1.2in, top=1in, bottom=1in]{geometry}
\usepackage[T1]{fontenc}
\usepackage{lmodern}

\usepackage{graphicx}
\usepackage{tikz}
\usepackage{amsmath,amsthm,amsfonts,amssymb}
\usepackage{hyperref}
\usepackage{natbib}
\usepackage{tikz}
\usepackage{subcaption}
\usepackage{microtype}

\newtheorem{theorem}{Theorem}[section]
\newtheorem{lemma}[theorem]{Lemma}

\theoremstyle{definition}

\newtheorem{condition}{Condition}

\theoremstyle{remark}
\newtheorem{remark}[theorem]{Remark}

\newcommand{\E}{\mathbb{E}}

\DeclareMathOperator{\Reg}{Reg}

\newcommand{\OPT}{\mathrm{OPT}}
\newcommand{\alg}[1]{\text{\normalfont\scshape #1}}
\newcommand{\period}{W}
\newcommand{\prob}{\pi}

\title{Learning in Strategic Queuing with Geometric Recharge:\\ From Bandits to Systems}

\author{%
  Cristian Palma Foster\thanks{Supported in part by AFOSR grant FA9550-231-0068 and ONR MURI grant N000142412742.} \\
  Cornell University \\
  \texttt{cjp276@cornell.edu}
  \and
  \'Eva Tardos\thanks{Supported in part by AFOSR grant FA9550-23-1-0410, AFOSR grant FA9550-231-0068, and ONR MURI grant N000142412742.} \\
  Cornell University \\
  \texttt{eva.tardos@cornell.edu}
}
\date{}

\begin{document}

\maketitle

\begin{abstract}
    We study packet routing in queuing systems where queues act as selfish, myopic agents that use no-regret learning algorithms to select servers with unknown service rates.
To model the small buffers used in data centers, we assume that each server has a single-packet buffer.
This is in contrast to previous bufferless models, where a server can only attempt to clear a packet in the exact step it is sent.
Even such a tiny buffer already captures the features of buffering:
routers can exploit the capacity of multiple servers while previously accepted packets remain in service,
but these packets can also occupy servers across multiple rounds, making future service opportunities, and learning, more complex.

We first analyze a single queue routing packets to $m$ buffered servers with service rates $\mu_1, \dots, \mu_m$.
This setting is an instance of \emph{recharging bandits}, specifically a stochastic multi-armed bandit problem with \emph{geometric recharge}.
While prior work on recharging bandits designs specialized algorithms to approximate the optimal dynamic policy, we analyze what standard, off-the-shelf no-regret learners achieve.
We prove that any no-regret learner achieves a long-run average throughput of at least the \emph{Hamacher sum} of the server rates, and that this bound is tight.
Consequently, a single no-regret queue is stable if and only if its arrival rate is below the Hamacher sum of the server rates.
Moreover, comparing against an optimal policy, we show that any no-regret learner is at least $0.77$-competitive (a \emph{price of learning} of at most $1.29$), which is also tight.

We then study multi-queue systems where $n$ queues with arrival rates $\lambda_1, \dots, \lambda_n$, each at most $\lambda_{\max}$, compete for service from $m$ buffered servers.
We establish a tight characterization of the \emph{overprovisioning} required for stability under high-probability no-regret learning: the smallest overprovisioning factor that guarantees the system to be stable is $\frac{1}{1-\lambda_{\max}}$. Equivalently, in terms of the total arrival and service rates, the condition $\sum_{i=1}^n \lambda_i < (1-\lambda_{\max})\sum_{j=1}^m \mu_j$ is sufficient, and sometimes necessary, for stability under no-regret learning.
In particular, for $\lambda_{\max} = 1/2$, this improves the upper bound of \citet{abel2026learning} from $3$ to $2$, matching their lower bound and resolving their open question.
Furthermore, as $\lambda_{\max} \to 0$, the required overprovisioning approaches $1$, matching the capacity required under centralized coordination.

\end{abstract}

\section{Introduction} \label{sec:introduction}

In this paper, we model routers in a communication network as agents that use simple learning algorithms to find the best way to get their packets served. % to decide where to send their packets for service.
This decentralized, myopic decision-making model captures how routers select paths in large networks and represents an important application of multi-agent learning in stateful environments, where past outcomes carry over to future rounds through unserved packets.
Managing a queuing network in such a distributed manner makes the system simple to operate, as routers need not coordinate, know server capacities in advance, nor know the state of each server.
However, the lack of central planning reduces efficiency: routers may send packets to busy servers or compete for the same server while others remain idle.
As a result, ensuring stability may require \emph{overprovisioning}: providing more service capacity than under centralized coordination.

In our model, packets arrive over time at each router \(i\) at rate \(\lambda_i\) and join a queue awaiting transmission.
At each time step, every router with a nonempty queue selects a server and sends it a packet.
Each server with available buffer space accepts one of the packets sent to it, while a server with a full buffer rejects all incoming packets. 
Rejected packets remain in their routers' queues.
Queues receive only bandit feedback: they observe whether their packet was accepted by the server they sent it to.
Each server \(j\) then attempts to process a packet from its buffer, if any, succeeding with probability \(\mu_j\).
We focus on single-packet buffers, which already capture the central features of buffering: allow routers to take advantage of the service capacity of multiple servers, but also make the learning more challenging, as an accepted packet can keep a server occupied across multiple rounds, affecting the service opportunities available to all routers.

\citet{DBLP:conf/nips/KrishnasamySJS16,DBLP:journals/ior/KrishnasamySJS21} initiated the study of a single queue using online learning to identify the best server to which to send packets.
In their model, servers have no buffers: a packet is either served immediately in the time step it arrives or returned to the queue.
Consequently, a queue with packet arrival rate $\lambda$ can be stabilized only if there is a single server whose service rate $\mu$ satisfies $\mu > \lambda$, and the queue's task reduces to a classical multi-armed bandit problem---identifying the single best server.
In this paper, we take a more realistic view and assume that each server has a tiny buffer capable of holding a single packet. Even a single-packet buffer can substantially increase the system's service capacity. Once a server accepts a packet into its buffer, it continues attempting to serve that packet across subsequent steps until it is cleared; in the meantime, the queue can send new packets to other servers and thereby take advantage of the combined service capacity of many servers at once.

From the perspective of a single queue with packets to send, routing to $m$ buffered servers with service rates $\mu_1, \dots, \mu_m$ is a natural special case of the \emph{recharging bandits} problem introduced by \citet{kleinberg2018recharging}, in which the expected reward of pulling an arm is a concave, non-decreasing function of the time elapsed since it was last pulled.
Specifically, if a queue last attempted to place a packet in server $j$'s buffer $t$ steps ago, with probability $1 - (1 - \mu_j)^t$, the server has finished serving that packet and is ready to accept a new one, giving a bandit problem with \emph{geometric recharge}.
However, whereas \citet{kleinberg2018recharging} design specialized planning and learning algorithms to compete with the optimal policy that explicitly accounts for recharge delays, we  follow the queuing literature~\citep{DBLP:conf/nips/KrishnasamySJS16,DBLP:journals/ior/KrishnasamySJS21,DBLP:conf/sigecom/GaitondeT20,DBLP:journals/jacm/GaitondeT23}, and assume that each queue simply runs a standard, off-the-shelf \emph{no-regret} learning algorithm, measuring regret against the best fixed server in hindsight without modeling the system state or its transitions.
We show that even such myopic, state-oblivious no-regret learners naturally spread their traffic across multiple servers, and we identify the exact condition under which no-regret learning guarantees stability, as well as its tight competitive ratio against an optimal policy.

In the multi-agent setting, \citet{DBLP:conf/sigecom/GaitondeT20,DBLP:journals/jacm/GaitondeT23} studied strategic queuing systems when servers have no buffers, showing that if servers prioritize older packets, a factor of $2$ increase in server capacity suffices to compensate for the distributed and myopic decision-making of the queues. This was later extended to directed acyclic queuing networks by \citet{DBLP:conf/wine/FuHL22}.
Recently, \citet{abel2026learning} introduced the multi-queue model with single-packet server buffers, which also removes the need for servers to track packet timestamps, and established initial bounds on the server capacity needed to ensure stability.
Our results are both substantially stronger and broader in scope: we determine the exact overprovisioning factor required for stability for any upper bound on the arrival rates, resolving the gap between \(2\) and \(3\) left open by their analysis when arrival rates are at most \(1/2\).

\subsection{Our Results}

\paragraph{Single-Queue Learning and Geometric Recharging Bandits (Section~\ref{sec:geometric_bandits}).}
We first analyze a single queue with arrival rate $\lambda$ routing packets to $m$ servers with service rates $\mu_1, \dots, \mu_m \in (0, 1)$, each equipped with a buffer of size one.
When the queue continuously has packets to send, its routing problem is equivalent to a stochastic multi-armed bandit problem with \emph{geometric recharge}. 

We show (Theorem~\ref{thm:no-regret-efficiency}) that every no-regret learning algorithm achieves a long-run average utility (packet-clearing rate) of at least the \emph{Hamacher sum} of $\mu_1, \dots, \mu_m$.
Furthermore, we show (Theorem~\ref{thm:stationary_distribution}) that this guarantee is tight: a simple stationary policy satisfying the no-regret property attains the lower bound.
As an immediate corollary (Remark~\ref{rem:stability}), the exact condition for a single queue running any no-regret learning algorithm to be stable is that its arrival rate is below the Hamacher sum of the service rates.

We then compare the throughput of a no-regret learner against an optimal policy, which maximizes cumulative expected reward without observing the realized rewards of unpulled arms. We show that its long-run average utility satisfies $\mathrm{OPT} \le 1 - \prod_{j=1}^m (1 - \mu_j)$ (Lemma~\ref{lem:opt_bound}). 
We prove (Theorem~\ref{thm:competitive}) that any no-regret learning algorithm is at least $0.77$-competitive.
Equivalently, the \emph{price of learning} for a single queue is at most $\approx 1.29$.
This bound is also tight, achieved asymptotically by a uniformly randomizing learner on instances with many identical servers and suitably chosen service rates.

\paragraph{Stability of Multi-Queue Systems (Section~\ref{sec:queuing_system}).}
We next turn to the multi-agent setting of \citet{abel2026learning}, where $n$ queues with arrival rates $\lambda_1, \dots, \lambda_n$ compete for $m$ shared buffered servers with service rates $\mu_1, \dots, \mu_m$.
A necessary condition for stability, even under centralized coordination, is that the total arrival rate is less than the total server capacity, that is, $\sum_{i=1}^n \lambda_i < \sum_{j=1}^m \mu_j$.
Decentralized no-regret learners require additional capacity beyond this baseline.
Suppose every queue has arrival rate at most $\lambda_{\max}$.
We prove (Theorem~\ref{thm:overprovisioning}) that an overprovisioning factor of $\frac{1}{1 - \lambda_{\max}}$ is \emph{both necessary and sufficient} for stability under high-probability no-regret learning. Specifically, if every queue uses a learning algorithm satisfying the no-regret property with high probability and
\[\sum_{i=1}^n \lambda_i < (1 - \lambda_{\max}) \sum_{j=1}^m \mu_j,\]
then the system is stable (the expected total number of packets in all queues remains bounded by a constant independent of time). Conversely, we show using our single-queue lower bound from Section~\ref{sec:geometric_bandits} that this factor of $\frac{1}{1 - \lambda_{\max}}$ is tight for every $\lambda_{\max} \in (0, 1)$.
This characterization yields two notable consequences. %:
First, it resolves the open question of \citet{abel2026learning} for \(\lambda_{\max}=1/2\).
They showed that when \(\lambda_{\max}\le 1/2\), an overprovisioning factor of \(3\) suffices, while a factor of at least \(2\) is necessary.
Setting \(\lambda_{\max}=1/2\) in Theorem~\ref{thm:overprovisioning} improves the upper bound from \(3\) to \(2\), matching their lower bound and providing a theoretical explanation for the behavior observed in their simulations.
Second, as \(\lambda_{\max}\to 0\), the required overprovisioning factor \(1/(1-\lambda_{\max})\) approaches \(1\).
Thus, when traffic is spread across many queues with small individual arrival rates, distributed no-regret learning stabilizes the system with asymptotically no extra capacity compared with centralized scheduling.

This characterization is robust to the choice of the \emph{server priority}, that is, the rule a server uses to select among packets arriving in the same round.
The stability guarantee holds under arbitrary priorities, while the matching single-queue example has no competing packets.
Thus, changing priorities alone cannot improve the worst-case overprovisioning required.

\paragraph{Single-Queue Simulations (Section \ref{sec:simulations}).}
For the single-queue setting, we simulate the evolution of the queue length under stochastic arrivals and the learning process under persistent backlog.
We find that, similar to the results in the bufferless setting~\citep{DBLP:conf/nips/KrishnasamySJS16, DBLP:journals/ior/KrishnasamySJS21}, the learner's backlog may initially grow large during exploration, but subsequently decreases and remains bounded once the algorithm achieves low regret, provided sufficient server capacity is available. Under persistent backlog, we also find that the classical learning algorithm EXP3~\citep{DBLP:journals/siamcomp/AuerCFS02} essentially converges to the steady distribution characterized in Theorem~\ref{thm:stationary_distribution}, illustrating the worst-case behavior in practice.

\subsection{Related Work}

\paragraph{Centralized Learning in Queuing Systems.}
Even a centralized coordinator may lack knowledge of system parameters, such as service rates, and must use observed outcomes to learn effective routing decisions.
Exploration helps identify effective routes, but poor routing decisions can cause queues to build up, creating a tradeoff between gathering information and maintaining queue stability.

In a queuing model similar to ours but without server-side buffers, \citet{DBLP:conf/nips/KrishnasamySJS16,DBLP:journals/ior/KrishnasamySJS21} study learning unknown service rates and analyze queue-length regret relative to a scheduler that knows the rates.
\citet{DBLP:conf/nips/0001LW23} quantify the maximum increase in time-averaged queue length caused by parameter uncertainty, with guarantees for specific learning algorithms and a lower bound for any learning policy.
These works share our aim of quantifying the inefficiency of learning in queuing systems, but measure the additional congestion incurred while learning to operate the system.
We instead characterize how learning affects long-run throughput and the capacity overprovisioning required for stability when servers retain packets in single-packet buffers.

Centralized learning has also been studied in dispatching models where jobs queue at the servers \citep{DBLP:conf/mobihoc/ChoudhuryJWS21}, including utility maximization with rewards associated with job completion~\citep{DBLP:journals/pomacs/FuM21,DBLP:journals/pe/FuM21}.
In multihop networks, related work considers utility from delivered traffic~\citep{DBLP:journals/ton/FuM22}, and under adversarially varying conditions, with objectives including queue-length regret minimization~\citep{DBLP:journals/pomacs/LiangM18} and traffic-admission utility maximization subject to queue stability~\citep{DBLP:journals/pomacs/DaiH24}.
These works design learning policies whose routing decisions serve a system-wide objective.
Our study of multi-queue systems is motivated by large data centers, where full coordination may be impractical and the system instead relies on simple decentralized learning policies.

\paragraph{Strategic and Decentralized Learning in Queuing Systems.}
Our work builds on the selfish learning in queuing models discussed above~\citep{DBLP:conf/sigecom/GaitondeT20,DBLP:journals/jacm/GaitondeT23,DBLP:conf/wine/FuHL22,abel2026learning}.
We study the single-packet-buffer model introduced by \citet{abel2026learning}, who bound the overprovisioning factor required for stability between $2$ and $3$ when arrival rates are at most $1/2$.
Whereas their potential function controls fluctuations from nearly empty queues and their analysis treats the contributions of service and regret guarantees separately, our novel potential function argument captures the tradeoff between these contributions induced by idleness.
This yields the sharp worst-case overprovisioning factor $1/(1-\lambda_{\max})$: it closes their gap at $\lambda_{\max}=1/2$, extends the guarantee to arbitrary maximum arrival rates below $1$, and approaches $1$ as $\lambda_{\max}$ vanishes.
For a single learner, we further derive a tight throughput guarantee in terms of the individual service rates, yielding a sharp stability condition.

In the bufferless model, \citet{DBLP:conf/nips/SentenacBP21} achieve stability without overprovisioning through a cooperative algorithm using synchronization and shared randomness, rather than selfish regret minimization.
In a different setting, \citet{DBLP:journals/pomacs/NguyenBM26} study stochastic network utility maximization with divisible traffic, designing stable policies that avoid congestion and approach equilibria.
Both approaches prescribe agents' behavior, whereas we use modest capacity overprovisioning to ensure stability under arbitrary decentralized no-regret learning, without requiring coordination or convergence.

For comparisons with prior empirical results, see Section~\ref{sec:empirical_related_work}.

\paragraph{Recharging Bandits.} 
Our work relates to bandit problems in which each arm’s expected reward depends on the time since its last pull, often called recovering bandits \citep{DBLP:conf/nips/Pike-BurkeG19}.
For nondecreasing reward functions, \citet{DBLP:conf/nips/Papadigenopoulos22} obtain a $(1-1/e)$ approximation for planning and corresponding learning guarantees.
Under the additional assumption of concavity, \citet{kleinberg2018recharging} introduce the recharging bandit model, develop a polynomial-time approximation scheme and extend it to learning.
We study geometric recharge, a natural model for memoryless delays arising from service completion in our queuing system and web crawling, discussed below.
While their approximation scheme achieves near-optimal performance through more involved algorithms tailored to recharge dynamics, our bounds establish that even simple, general-purpose no-regret algorithms provide strong utility and stability guarantees.

\paragraph{Web Crawling.} Geometric recharge also arises in web crawling: if page \(j\) changes independently with probability \(\mu_j\) per step, revisiting it after \(t\) steps detects at least one change with probability \(1-(1-\mu_j)^t\).
Prior work studies how to schedule crawls to maximize cache freshness \citep{coffman1998optimal,DBLP:journals/tods/ChoG03}.
\citet{DBLP:journals/pnas/AzarHLPS18} derive an explicit formula for the optimal stationary crawling policy that maximizes weighted cache freshness.
When update rates are unknown and only bandit feedback is available, \citet{DBLP:conf/aaai/UpadhyayBKPS20} show that a simple explore-then-exploit scheme has sublinear regret against this optimal stationary policy.
These models share the recharge dynamics of our single-agent setting, but optimize how much of the time cached copies are up to date, rather than how often a crawl detects a change.

\section{Preliminaries} \label{sec:preliminaries}

For a positive integer $k$, let $[k]=\{1,\ldots,k\}$.
We consider discrete time steps $t=1,2,\ldots$.
Our results concern two closely related stateful learning settings: a single-agent recharging bandit problem and a multi-agent queuing system.
We first define the notion of regret used in both settings.

\subsection{No-Regret Learning}

Consider a learner choosing an action from $[m]$ at every time step with bandit feedback, that is, the learner observes the reward of its chosen action, but not those of the other actions.
For each step $t$ and action $j$, let $u_t(j)\in[0,1]$ denote the utility the learner would obtain from choosing $j$ at that step, evaluated on the realized history up to time $t$.
Importantly, this is a one-step counterfactual: changing the action at time $t$ does not change the history or the state of the system at later steps.

Let $a_t$ be the action selected by the learner at step $t$.
Over a time horizon $T$, its utility is
\[U_T=\sum_{t=1}^T u_t(a_t),\]
and its regret with respect to the best fixed action in hindsight is
\[\Reg_T=\max_{j\in[m]}\sum_{t=1}^T u_t(j)-U_T.\]

We say that a learning algorithm is \emph{no-regret} if 
$\E(\Reg_T) \leq o(T)$.
When needed for the multi-agent setting, we use the stronger high-probability version.
We say that an algorithm satisfies the \emph{no-regret property with high probability} if, for every $\eta>0$, there exists a function $r_\eta(T)=o(T)$ such that, conditional on any history before a period of length $T$,
\[\Pr\big(\Reg_T\leq r_\eta(T)\big)\geq 1-\eta\]
for all sufficiently large $T$. 
For examples of algorithms that guarantee no-regret, see the book by \citet{DBLP:books/daglib/0016248}.

\subsection{Geometric Recharging Bandits} \label{sec:geometric_bandits_model}

The single-queue setting is equivalent to a special case of the recharging bandits problem \citep{kleinberg2018recharging},
which we call the \emph{geometric recharging bandits problem}.
A learner faces $m$ arms, where each arm $j\in[m]$ has a recharging rate $\mu_j\in(0,1)$.
After an arm is pulled, it recharges geometrically: if arm $j$ was last pulled $d$ steps ago, then pulling it gives reward $1$ with probability $1-(1-\mu_j)^d$, and reward $0$ otherwise.
The learner receives bandit feedback, observing only the reward of the arm it pulls.
Equivalently, after each pull the arm becomes unavailable and, independently at each subsequent step, becomes available with probability $\mu_j$, and once available, it remains so until it is pulled.
We adopt the convention that each arm yields reward \(1\) on its first pull, although our results are unaffected by the choice of initial rewards.

This model captures a single queue that always has a packet to send to one of $m$ buffered servers, when we ignore the queue length and focus only on the number of packets cleared from the queue.
Arm $j$ represents server $j$, and its rate $\mu_j$ is the server's service rate.
Pulling an arm corresponds to attempting to send a packet to that server, with reward $1$ if the packet is accepted and reward $0$ otherwise.
If the queue remains nonempty for a sufficiently long period, its behavior over that period is described by the geometric recharging bandits problem, and a long-run average utility of $\nu$ corresponds to clearing packets from the queue at average rate $\nu$.
Hence, the corresponding single queue is stable for any arrival rate $\lambda< \nu$.

For a learning algorithm with cumulative utility $U_T$, we define its long-run average utility as
\[\liminf_{T\to\infty}\frac{\E(U_T)}{T}.\]

Our results connect to the \emph{Hamacher sum} of the service rates $\mu_1, \ldots, \mu_m \in (0,1)$.
The \emph{odds} of the rate $\mu_j$ are $\rho_j=\frac{\mu_j}{1-\mu_j}$, and their \emph{total odds} is $\rho=\sum_{j=1}^m \rho_j$.
The Hamacher sum of the rates is $\frac{\rho}{1+\rho}$.
This operation was introduced by \citet{hamacher1976logical} in the unrelated context of fuzzy-logic aggregation.
This quantity will characterize both the utility guaranteed by no-regret learning and its competitive ratio in Section~\ref{sec:geometric_bandits}.

\subsection{Strategic Queuing Systems} \label{sec:queuing_model}

We next consider the multi-agent queuing model of \citet{abel2026learning}.
There are $n$ queues and $m$ servers.
Queue $i\in[n]$ has arrival rate $\lambda_i\in(0,1)$, while server $j\in[m]$ has service rate $\mu_j\in(0,1)$.
At every time step, independently for each queue $i$, a packet arrives with probability $\lambda_i$.
After arrivals occur, every nonempty queue selects a server and attempts to send one packet to it.

Each server has a buffer that can hold a single packet.
If the buffer of server $j$ is full, all packets sent to it are rejected.
If its buffer is empty, it accepts one of the packets sent to it, chosen arbitrarily, and rejects the others.
A rejected packet remains in its queue. 
Each queue observes only whether its own packet was accepted by the server it selected, and receives no feedback about the other servers.
After this, server $j$ attempts to process the packet in its buffer, succeeding independently with probability $\mu_j$.
If service succeeds, the packet leaves the system.
Otherwise, it remains in the buffer for the next step.

The utility of a queue is determined by packet acceptance rather than service completion.
A queue receives utility $1$ when its packet is accepted into a server's buffer and utility $0$ when it is rejected, and this feedback is immediate.
For a queue with a packet to send at time $t$, $u_t(j)$ denotes whether its packet would be accepted by server $j$ at that step, keeping the realized history and the behavior of all other queues fixed.
If the queue is empty, we set $u_t(j)=0$ for every $j$.
The regret of each queue is then defined as above, relative to the best fixed server in hindsight.

We say that the system is \emph{stable} if the expected number of packets in the system is bounded by a constant uniformly over time.
The constant may depend on $n$, $m$, and the parameters of the system, but not on time.
Since every server contains at most one packet, this is equivalent, up to an additive $m$, to requiring that the expected total number of packets waiting at the queues remains bounded.

Finally, we use the \emph{overprovisioning factor} to measure the multiplicative amount of aggregate service capacity available relative to aggregate arrivals.
We say a system has an overprovisioning factor of at least $\alpha$ if 
\[\sum_{i=1}^n\lambda_i < \frac{1}{\alpha} \sum_{j=1}^m\mu_j.\]
The tight overprovisioning factor $\alpha$ for a family of systems is the smallest factor such that any system is stable with an overprovisioning factor of at least $\alpha$, while for any $\varepsilon > 0$, there are systems that are unstable with an overprovisioning factor of $(1- \varepsilon)\alpha$.
Section~\ref{sec:queuing_system} characterizes exactly the factor required for stability when all queues use no-regret learning.

\section{Geometric Recharging Bandits} \label{sec:geometric_bandits}

In this section, we consider the geometric recharging bandits problem.
We provide tight guarantees for what a learner satisfying the no-regret property can achieve.
In particular, for each fixed set of recharging rates, we characterize the worst possible long-run average utility of a no-regret learner.
In the single-queue setting, these recharging rates correspond to the service rates of the servers, and our guarantee implies stability for any arrival rate below this long-run utility.
Combining this characterization with an upper bound on the long-run average utility of an optimal policy, we obtain a tight competitive ratio that holds for any set of recharging rates and for any learning algorithm guaranteeing no regret. 

Our guarantees are naturally expressed in terms of the odds associated with the recharging rates.
Recall that for each $j \in [m]$,  $\rho_j := \mu_j/(1-\mu_j)$ denotes the odds associated with the rate $\mu_j$, and  $\rho := \sum_{j=1}^m \rho_j$ denotes the total odds.

\subsection{Utility of No-Regret Learners}

We first characterize the worst-case long-run average utility for no-regret learners.
Even if one arm has a lower rate than another, leaving it unpulled allows its expected reward to increase, eventually making it attractive in hindsight.
Thus, the player would regret not visiting the lower rate arm.
The following theorem shows that these per-arm constraints aggregate through the total odds $\rho$.

\begin{theorem} \label{thm:no-regret-efficiency}
Every no-regret learner in the geometric recharging bandits problem with rates $\mu_1, \ldots, \mu_m$ has long-run average utility of at least
\[\frac{\rho}{1+\rho}
= \frac{\sum_{j=1}^m \mu_j/(1-\mu_j)}
{1+\sum_{j=1}^m \mu_j/(1-\mu_j)}.\]
\end{theorem}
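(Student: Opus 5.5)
The plan is to lower-bound the counterfactual utility $\sum_t u_t(j)$ of each fixed arm $j$ in terms of the learner's own realized utility $U_T$, and then invoke the no-regret property. We use $\E\sum_t u_t(j) \le \E U_T + o(T)$ for every $j$, and combine these $m$ inequalities with weights chosen so that the odds $\rho_j$ appear.

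\textbf{Per-arm accounting.} Fix an arm $j$ and split time into \emph{excursions} between consecutive successful pulls of $j$, i.e.\ pulls with reward $1$. After a successful pull, arm $j$ becomes unavailable. It then becomes available at a geometric time, after which it stays available until pulled. Whenever $j$ is available, $u_t(j)=1$. Call a step $t$ \emph{$j$-available} if $j$ is available at $t$, so $u_t(j)=\mathbf{1}[j \text{ available at } t]$. At each step when $j$ is unavailable, it becomes available with probability $\mu_j$, independently of the learner's choices. Each excursion ends with exactly one rewarded pull of $j$. Let $N_j$ be the number of rewarded pulls of $j$ up to $T$, and $A_j=\sum_t u_t(j)$ the number of available steps. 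The unavailable steps number $T-A_j$, and each availability onset consumes one geometric waiting period. A martingale (Wald-type) argument then gives
\[
\E(T-A_j) \le \frac{1-\mu_j}{\mu_j}\,\E N_j + O(1).
\]
The idea is that each unavailable stretch has mean length at most $(1-\mu_j)/\mu_j$, counting only steps strictly after the pull; the exact off-by-one convention is the thing to check. Equivalently, $\E(T-A_j)\,\rho_j \le \E N_j + O(1)$.

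\textbf{Aggregation.} Summing over $j$ and using $\sum_j N_j = U_T$ gives
\[
\sum_j \rho_j\,\E(T-A_j) \le \E U_T + O(m).
\]
No-regret gives $\E A_j \le \E U_T + o(T)$ for every $j$, so $T-A_j \ge T - U_T - o(T)$ in expectation. Therefore
\[
\rho\,(T-\E U_T) \le \E U_T + o(T),
\]
which yields $\E U_T/T \ge \rho/(1+\rho) - o(1)$. Taking $\liminf$ proves the theorem.

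\textbf{Main obstacle.} The delicate step is making the per-arm inequality rigorous and getting the constant exactly $(1-\mu_j)/\mu_j$ rather than $1/\mu_j$. The resolution is that a step at which $j$ is pulled successfully ends an available stretch, and at the very next step $j$ is unavailable only with probability $1-\mu_j$, since the recharge coin is flipped before the next step.

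Concretely, define $X_t=\mathbf{1}[j \text{ unavailable at } t]$. The availability process is Markov given the pulls: if $j$ is unavailable at $t$, it is unavailable at $t+1$ with probability $1-\mu_j$; if $j$ is available and not pulled, it stays available. Consider
\[
M_t=\sum_{s\le t}\Big(X_{s+1}-(1-\mu_j)X_s-(1-\mu_j)\mathbf{1}[\text{rewarded pull of } j \text{ at } s]\Big),
\]
which is a martingale. Optional stopping at $T$ gives $\mu_j\,\E\sum_t X_t \le (1-\mu_j)\,\E N_j + 1$, which is exactly the claimed inequality. The first-pull convention and the initial states contribute only $O(1)$.
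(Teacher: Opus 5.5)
Your proof is correct and follows the same strategy as the paper. You prove the per-arm inequality $\rho_j\,\E(T-A_j)\le \E N_j+O(1)$ (the paper's $\rho_j(T-C_j)\le R_j$), combine it with the no-regret bound $\E A_j\le \E U_T+o(T)$ for each fixed arm, and sum over $j$. The only difference is how you obtain the per-arm inequality: a one-step martingale on the realized availability indicator, giving the exact relation $\mu_j\E\sum_{t\le T}X_t=(1-\mu_j)\E N_j+\E X_1-\E X_{T+1}$. The paper instead works with rewards in expectation conditional on the history and sums $\sum_{\ell=1}^{d}\bigl(1-(1-\mu_j)^\ell\bigr)$ over the gaps between consecutive pulls, yielding the pathwise identity $C_j=T-R_j/\rho_j+(1-\mu_j)^{d_K}/\rho_j$.
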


In particular, the guarantee depends on the recharging rates only through their total odds \(\rho\).
The factor $\frac{\rho}{1+\rho}$ is known as the \emph{Hamacher sum} of $\mu_1,\ldots,\mu_m$.

\begin{proof}
Throughout the proof, a reward at time \(t\) is understood in expectation conditional on the history before step \(t\).
Similarly, a cumulative expected utility is understood as the sum of the reward at each time \(t\), each conditional on the history before step \(t\).
By the tower property, the expectation of any cumulative expected utility coincides with the expected value of the corresponding cumulative realized utility.

For each arm \(j\), let \(r_{j,t}\) denote the reward arm \(j\) would yield if pulled at step \(t\).
Let \(R_j\) be the contribution to the cumulative expected utility of the learner coming from pulls of arm \(j\), and let \(C_j\) be the cumulative expected utility of always pulling arm \(j\) against the sequence of rewards generated by the learner's history, which we refer to as the utility in hindsight.
More precisely, if \(A_t\) is the arm pulled by the learner at step \(t\), then
\[R_j=\sum_{t=1}^T \mathbf{1}_{\{A_t=j\}} r_{j,t}, \qquad \text{and} \qquad C_j=\sum_{t=1}^T r_{j,t}.\]
Hence, \(R:=\sum_{j=1}^m R_j\) is the cumulative expected utility of the learner.

Fix an arm $j \in [m]$, and let $K$ be the number of times it is pulled.
For $k \in [K-1]$, let $d_k$ be the delay between the \(k\)-th pull and the $(k+1)$-th pull of arm \(j\).
Except for the first pull which is a guaranteed reward of $1$, a pull after a delay of $d$ contributes a utility of $1-(1-\mu_j)^d$.
Hence, the contribution from arm $j$ to the expected utility of the learner is
\[R_j = 1 + \sum_{k=1}^{K-1} 1 - (1- \mu_j)^{d_k}.\]

When arm $j$ is pulled by the learner after a delay of $d$ steps, the contribution to $C_j$ from these $d$ steps is
\begin{equation} \label{eq:hinsight_per_pull}
\sum_{\ell = 1}^d 1- (1-\mu_j)^\ell = d - \frac{1-\mu_j}{\mu_j}\left(1-(1-\mu_j)^d\right) = d - \frac{1}{\rho_j}\left(1-(1-\mu_j)^d\right).    
\end{equation}

Let $d_0$ be the step at which arm $j$ is pulled for the first time.
Each of these steps contributes a utility of $1$ in hindsight.
Subsequent steps can be grouped between consecutive pulls, with the contribution in hindsight to the expected utility given by Equation~(\ref{eq:hinsight_per_pull}).
Let $d_K$ be the number of rounds after the last pull of arm $j$.
The contribution in hindsight from these steps is also given by Equation~(\ref{eq:hinsight_per_pull}).
Since every step corresponds to exactly one of these three categories, we have $\sum_{k=0}^K d_k = T$.
Hence, summing all steps, the expected utility in hindsight is
\begin{align*}
    C_j & = d_0 + \sum_{k=1}^K d_k - \frac{1}{\rho_j} \left(1-(1-\mu_j)^{d_k}\right) = T - \frac{1}{\rho_j} (R_j - 1 + 1 -(1-\mu_j)^{d_K})\\
    &= T - \frac{1}{\rho_j} R_j + \frac{1}{\rho_j}(1-\mu_j)^{d_K} \geq T - \frac{1}{\rho_j} R_j.
\end{align*}

The no-regret property holds in expectation over the learner's random choices for the realized utility of the learner and the realized utilities in hindsight of always pulling each fixed arm.
Taking expectation also over the randomness of the arms and using the tower property, we obtain
\[\E(R) \geq \E(C_j) - o(T) \geq T - \frac{1}{\rho_j} \E(R_j) - o(T), \qquad \forall j \in [m].\]
Rearranging and multiplying by $\rho_j$ we get
\[\rho_j(T - \E(R)) - o(T) \leq \E(R_j), \qquad \forall j \in [m].\]
Adding all contributions, we deduce $\rho (T - \E(R)) - o(T) \leq \E(R)$, and conclude that 
\[\E(R) \geq \frac{\rho}{1+\rho} T - o(T),\]
dividing by the time horizon $T$, we conclude.
\end{proof}

\begin{remark}
    If no regret holds with high probability (over the randomization of the learner), so does this guarantee.
\end{remark}

We next show that this guarantee is tight.
The bound is attained by a simple stationary policy that samples arms proportionally to their odds.
Define the \emph{normalized-odds distribution} 
$\prob = (\prob_j)_{j=1}^m$ by
\[\prob_j := \frac{\rho_j}{\rho} = \frac{\mu_j/(1-\mu_j)}{\sum_{j=1}^m \mu_j/(1-\mu_j)}.\]
As we show next, this choice exactly balances the utilities in hindsight of the different arms, and hence it is no-regret.

\begin{theorem} \label{thm:stationary_distribution}
Consider the algorithm that, independently in each round, samples arm $j$ according to the normalized-odds distribution $\prob$.
This algorithm is no-regret and attains long-run average utility exactly $\frac{\rho}{1+\rho}$.
\end{theorem}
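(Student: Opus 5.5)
Under the i.i.d.\ policy with distribution $\prob$, each arm $j$ evolves as an independent two-state Markov chain (available/unavailable), because the pulls of arm $j$ form an i.i.d.\ Bernoulli$(\prob_j)$ sequence independent of the arm's recharge coins. The plan is to compute the stationary probability that arm $j$ is available at the moment of a potential pull. Let $a_j$ be this stationary probability at the start of a step, before the pull decision. If arm $j$ is available, it becomes unavailable only when it is pulled, which happens with probability $\prob_j$. If it is unavailable, it becomes available with probability $\mu_j$ at each subsequent step. Concretely, I will write the transition for ``available at start of step $t+1$'': from available, it stays available unless pulled and, if pulled, recharges within the step with probability $\mu_j$. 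This gives $P(\text{av}\to\text{un})=\prob_j(1-\mu_j)$ and $P(\text{un}\to\text{av})=\mu_j$. (I will fix the within-step ordering to match the convention that a delay of $d$ yields $1-(1-\mu_j)^d$; a pull at step $t$ followed by a pull at $t+1$ has $d=1$ and success probability $\mu_j$, consistent with this.) Balance gives
\[
a_j=\frac{\mu_j}{\mu_j+\prob_j(1-\mu_j)}=\frac{\rho_j}{\rho_j+\prob_j}=\frac{\rho}{\rho+1},
\]
using $\prob_j=\rho_j/\rho$. So every arm has the same stationary availability $\rho/(1+\rho)$. This equality of availabilities is the point of the normalized-odds choice.

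\emph{Utility.} The per-step expected reward is $\sum_j \prob_j a_j=\rho/(1+\rho)$ in stationarity. Each chain is finite and irreducible, and in fact geometrically ergodic, since it starts available and mixes at rate bounded away from $1$. The expected cumulative utility is therefore $\frac{\rho}{1+\rho}T+O(1)$, which gives long-run average exactly $\rho/(1+\rho)$. Alternatively, Theorem~\ref{thm:no-regret-efficiency} supplies the lower bound once no-regret is shown, and the chain computation supplies the upper bound.

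\emph{No-regret.} The regret is $\max_j C_j-R$, where $C_j=\sum_t r_{j,t}$ counts the steps at which arm $j$ is available under the learner's realized history. By the ergodic theorem with exponential concentration for each finite Markov chain, $C_j=a_jT+O(\sqrt{T\log T})$ with high probability, and the same holds for $R$, which is a sum of rewards whose mean is $\rho/(1+\rho)$ per step. Since all $a_j$ are equal, $\E(\max_j C_j)-\E(R)\le o(T)$. To pass from concentration of each $C_j$ to control of the maximum, I will use a union bound over the $m$ arms, with Hoeffding-type bounds for Markov chains or a regeneration (renewal-cycle) argument, where cycles are pull-to-pull intervals with geometric tails.

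The main obstacle is this last step: bounding the expectation of the maximum of correlated sums, and making precise the counterfactual $C_j$ defined on the realized history. I expect the regeneration argument to handle it cleanly, because each arm's availability process is an independent ergodic chain, so no cross-arm dependence needs to be controlled beyond the union bound.
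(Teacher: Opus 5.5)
Your proposal is correct and follows essentially the paper's route. Your two-state balance equation yields the same per-pull success probability $\rho_j/(\rho_j+\prob_j)=\rho/(1+\rho)$ that the paper obtains by summing over the geometric distribution of inter-pull delays, and your Markov-chain concentration plays the role of the paper's strong law of large numbers, with the added benefit of explicit $O(1)$ and $O(\sqrt{T\log T})$ error terms.

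One small correction: the per-arm availability chains are not mutually independent, since exactly one arm is pulled per step. This does not matter, because you only use marginal concentration of each $C_j$ plus a union bound. Indeed $\E(\max_j C_j)\le \frac{\rho}{1+\rho}T+\sum_j \E\bigl|C_j-\frac{\rho}{1+\rho}T\bigr|$ already suffices, so the step you flagged as the main obstacle is not really one.
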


\begin{proof}
The probability that, when the learner pulls arm $j$, the last time it did so was $k$ steps ago is $\prob_j(1-\prob_j)^{k-1}$.
Given that the previous time at which the learner pulled arm $j$ was $k$ steps ago, then the chance that arm gives a reward of $0$ is $(1-\mu_j)^k$.
Hence, the probability the next pull gives a reward of $0$ is
\begin{align*}
&\sum_{k=1}^{\infty} \prob_j(1-\prob_j)^{k-1}(1-\mu_j)^k
=\prob_j(1-\mu_j)\sum_{k=1}^{\infty}\big((1-\prob_j)(1-\mu_j)\big)^{k-1}\\
=&\frac{\prob_j(1-\mu_j)}{1-(1-\prob_j)(1-\mu_j)}
=\frac{\prob_j(1-\mu_j)}{\prob_j(1-\mu_j)+\mu_j}
=\frac{\prob_j}{\prob_j + \rho_j}
=\frac{1/\rho}{1/\rho + 1}
=\frac{1}{1+\rho}.
\end{align*}
Thus, the probability that the next pull gives a reward of $1$ is $\frac{\rho}{1+\rho}$.
And so is the long-run average utility of the algorithm, by the strong law of large numbers for the resulting Markov chain.

It remains to show that the algorithm is no-regret.
Notice that the previous result also shows us that the expected utility in hindsight of any arm at any step is $\frac{\rho}{1 + \rho}$.
So for each arm $j$, by the strong law of large numbers for the resulting Markov chain, its average utility in hindsight goes to $\frac{\rho}{1 + \rho}$ almost surely, and so does the maximum across the $m$ arms.
Hence, since the average utility of the learner converges to $\frac{\rho}{1 + \rho}$ as well, the average regret goes to zero almost surely.
In particular it does in expectation and with high probability.
\end{proof}

\begin{remark} \label{rem:stability}
   In the single-queue setting, this result yields the tight stability condition $\lambda<\frac{\rho}{1+\rho}$ for arbitrary no-regret learning.
    In particular, single-packet buffers enable stability even when $\lambda>\max_j\mu_j$, which is impossible in the bufferless model.
\end{remark}

The preceding analysis reveals an additional property of the normalized-odds distribution.
For a stationary policy to have no regret, the expected utilities in hindsight of all arms must be equal.
For arm $j \in [m]$, this expected utility is 
\[1 - \frac{\prob_j}{\prob_j + \rho_j} = \frac{\rho_j}{\prob_j + \rho_j} =  \frac{1}{\prob_j/\rho_j + 1}.\]
Thus, these utilities are all equal if and only if $\prob_j/\rho_j$ is a constant.
Since the probabilities sum to one, this uniquely gives $\prob_j=\rho_j/\rho$.
Hence, the normalized-odds distribution is the unique stationary policy with no regret. %i.i.d.
Furthermore, if the mixed strategy of a no-regret learner converges, its limit must be the normalized-odds distribution.

\begin{remark}
In the context of web crawling, \citet{DBLP:journals/pnas/AzarHLPS18} derive an explicit formula for the optimal stationary policy maximizing cache freshness.
Their formula applies to arbitrary weights on the pages, representing their request frequencies or, more generally, their importance: keeping a more heavily weighted page fresh contributes proportionally more to the objective.
In general, pages with high refresh rates $\mu_j$ may receive zero probability under the optimal distribution, since retrieving such a page improves freshness only briefly before it is likely to change again.
This differs sharply from our objective, where a higher rate makes an arm more rewarding to revisit.
Perhaps surprisingly, choosing these importance weights proportional to the pages' own refresh rates $\mu_j$ makes the optimal stationary policy coincide with the normalized-odds distribution.
\end{remark}

\subsection{The Competitive Ratio of Learning}

We now show a competitive ratio guarantee for no-regret learning.
We benchmark against an optimal policy that knows all recharging rates $\mu_1,\dots,\mu_m$ and is computationally unbounded, but receives only bandit feedback.
That is, the policy cannot observe the recharge state of an arm, except when it pulls that arm.
Let $\alg{OPT}$ be the long-run average expected utility of an optimal policy.
To obtain the competitive ratio guarantee, we compare the learner’s utility guarantee with an upper bound on $\alg{OPT}$. We derive this upper bound using two observations. 
For a fixed number of pulls of an arm, its contribution to the expected reward is maximized by spacing those pulls evenly over time.
Across arms, the optimal allocation of pulls equalizes the reward obtained from each pull.

\begin{lemma} \label{lem:opt_bound}
$\alg{OPT} \leq 1- \prod\limits_{j=1}^m (1-\mu_j) \leq 1 - e^{-\rho}$
\end{lemma}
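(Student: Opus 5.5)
The plan has two parts: first bound the contribution of each arm separately, then optimize how pulls are split across arms. The second inequality, $1-\prod_j(1-\mu_j)\le 1-e^{-\rho}$, is elementary. For $x\in(0,1)$ we have $\ln(1-x)\ge -x/(1-x)$, which follows from $\ln(1/(1-x))\le 1/(1-x)-1$. Hence $\prod_j(1-\mu_j)\ge e^{-\sum_j\rho_j}=e^{-\rho}$.

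\emph{Step 1 (per-arm bound).} Fix any policy and horizon $T$, and let $f_j=\E(K_j)/T$ be the expected pull frequency of arm $j$, so $\sum_j f_j\le 1$. As in the proof of Theorem~\ref{thm:no-regret-efficiency}, the expected reward from arm $j$ is $R_j=1+\sum_{k=1}^{K_j-1}\bigl(1-(1-\mu_j)^{d_k}\bigr)$, with $\sum_k d_k\le T$. This expression is valid for any policy that uses only bandit feedback, since the recharge process is independent of the pull times. The function $g(d)=1-(1-\mu_j)^{d}$ is concave in $d$, and it extends to real $d$ in the same way. By Jensen's inequality, for given $K_j$ we get $R_j\le 1+K_j\,g(T/K_j)$, so evenly spaced pulls are best. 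Taking expectations and applying Jensen again to the concave perspective map $K\mapsto K g(T/K)$ gives $\E(R_j)/T\le f_j\,g(1/f_j)+o(1)$.

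\emph{Step 2 (allocation across arms).} It remains to maximize $\Phi(f)=\sum_j f_j\bigl(1-(1-\mu_j)^{1/f_j}\bigr)$ over $\sum_j f_j\le 1$. Each term is increasing in $f_j$, because $g$ is concave with $g(0)=0$, so $g(d)/d$ is decreasing. The integrality of the delays is a subtlety: $g(1/f)$ at non-integer $1/f$ is only an upper-bound relaxation, but that is harmless for an upper bound. The natural candidate is the allocation suggested before the lemma, which equalizes per-pull rewards. I would try to show directly that $\Phi(f)\le 1-\prod_j(1-\mu_j)$ for every feasible $f$.

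A clean route is to write $(1-\mu_j)^{1/f_j}=e^{-c_j/f_j}$ with $c_j=-\ln(1-\mu_j)$. Then $\Phi(f)=\sum_j f_j(1-e^{-c_j/f_j})$. The map $(f,c)\mapsto f(1-e^{-c/f})$ is the perspective of the concave function $1-e^{-c}$, so it is jointly concave and positively homogeneous of degree one. A concave, degree-one homogeneous function is superadditive. Summing over $j$ and using $\sum_j f_j\le 1$ together with monotonicity in $f$ gives
\[
\Phi(f)\le \Bigl(\sum_j f_j\Bigr)\Bigl(1-e^{-\sum_j c_j/\sum_j f_j}\Bigr)\le 1-e^{-\sum_j c_j}=1-\prod_j(1-\mu_j).
\]
Dividing by $T$ and taking $\limsup$ then bounds $\alg{OPT}$.

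\emph{Main obstacle.} The delicate step is Step 1: justifying the Jensen argument rigorously for adaptive policies, with random $K_j$ and delays chosen based on feedback. The key point to check is that, conditional on pull times, the reward from a pull after delay $d$ has expectation $g(d)$. This holds because the policy learns nothing about arm $j$'s state beyond what its own pulls reveal, and each pull resets the arm's state. Tower-property bookkeeping, as in the proof of Theorem~\ref{thm:no-regret-efficiency}, should handle it. The boundary terms (the first pull and the final gap) contribute $O(m)$ and vanish after dividing by $T$.
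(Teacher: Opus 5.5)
Your proof is correct and follows the paper's route. You use Jensen on the concave map $d\mapsto 1-(1-\mu_j)^d$ to reduce to evenly spaced pulls, then Jensen across arms on $y\mapsto 1-e^{y}$ with weights proportional to pull counts (your superadditivity of the perspective is exactly this step), and finally $\log(1-\mu_j)\ge-\rho_j$ for the second inequality.

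The only difference is that the paper argues pathwise for a realized schedule, where $\sum_j t_j=T$ holds exactly, so your extra Jensen over the random $K_j$ and the monotonicity-in-$f$ step are unnecessary. Note also that the right conditioning is on the history before each pull, not on all pull times, since future pull times of an adaptive policy can reveal past rewards.
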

\begin{proof}
Without loss of generality, we ignore the rewards from the first pull of each arm.
Since each pull gives a reward of at most $1$, the value of the schedule is only reduced by at most $m$, which is negligible in the long-run average. 

Consider an arbitrary schedule.
We restrict our attention to arms that are pulled at least once in the schedule.
If some arm is not pulled, then we obtain a stronger upper bound.
Fix an arm $j$, and let $t_j$ be the number of times it is pulled.
For $s \in [t_j-1]$, let $d_s$ be the delay between the \(s\)-th pull and the $(s+1)$-th pull of arm \(j\).
Let $d_{t_j} = T - \sum_{s=1}^{t_j-1} d_s$.
Adding the fictitious expected reward corresponding to a pull with this delay only increases the expected value of the schedule.
A pull of arm $j$ after a delay of $d$ has an expected reward of $1-(1-\mu_j)^d$, which is concave in $d$.
By Jensen's inequality, the contribution from arm $j$ (ignoring the first pull) to the expected value of the schedule satisfies
\[\sum_{s=1}^{t_j} 1 - (1-\mu_j)^{d_s} \leq t_j \left(1-(1-\mu_j)^{T/t_j}\right).\]

Since exactly one arm is pulled at every step, we have $\sum_{j=1}^m t_j = T$.
We have that \(1-(1-\mu_j)^{T/t_j} = 1-e^{\log(1-\mu_j)T/t_j}\), where \(1-e^{y}\) is concave in \(y\).
By Jensen's inequality, the expected value of the schedule is at most
\begin{equation} \label{eq:opt_bound}
\sum_{j=1}^m t_j \left(1-(1-\mu_j)^{T/t_j}\right) = \sum_{j=1}^m t_j \left(1-e^{\log(1-\mu_j) T /t_j}\right) \leq T \left(1-e^{\sum_{j=1}^m \log(1-\mu_j)}\right).
\end{equation}
Since $\sum\limits_{j=1}^m \log(1-\mu_j) = \log\Bigl(\prod\limits_{j=1}^m (1-\mu_j)\Bigr)$, the previous bound is equal to $T \Bigl(1-\prod\limits_{j=1}^m (1-\mu_j)\Bigr)$.
Therefore, on average over the time horizon $T$, we have $\OPT \leq 1-\prod\limits_{j=1}^m (1-\mu_j)$.

Recall the definition of the odds $\rho_j$ associated with $\mu_j$.
Using the inequality $\log(x) \geq 1-\frac{1}{x}$, for $x > 0$, we have
\[\log(1-\mu_j) \geq 1 - \frac{1}{1-\mu_j} = - \frac{\mu_j}{1-\mu_j} = -\rho_j.\]
Applying this to Equation~(\ref{eq:opt_bound}), we conclude $\OPT \leq 1- e^{-\rho}$.
\end{proof}

\begin{remark} \label{rem:identical_arms}
The first upper bound is tight for identical arms with recharging rate $\mu$, where Round Robin attains the optimal long-run average utility $1-(1-\mu)^m$.
For identical arms, the greedy algorithm that pulls an arm with highest expected utility at each step coincides with Round Robin and is therefore optimal.
\end{remark}

Let $\rho^* \approx 1.79$ be the minimizer of the expression
\[\frac{\frac{\rho}{1+\rho}}{1-e^{-\rho}}\]
over $\rho > 0$, which is also the unique positive solution to the equation $e^\rho = \rho^2 + \rho +1$.
Combining the guarantee from Theorem~\ref{thm:no-regret-efficiency} and the bound from Lemma~\ref{lem:opt_bound}, we show the following competitive ratio guarantee.
\begin{theorem} \label{thm:competitive}
    Any no-regret learning algorithm for the geometric recharging bandits problem is $0.77$-competitive.
    That is, if $\alg{ALG}$ is the long-run average utility of a learning algorithm guaranteeing no regret, then
    \[\frac{\alg{ALG}}{\alg{OPT}} \geq \frac{\frac{\rho^*}{1+\rho^*}}{1-e^{-\rho^*}} \approx 0.77,\]
    And this is tight in the sense that for the family of instances with $m$ identical arms with rates $\mu = \frac{\rho^*}{m + \rho^*}$, and a learner uniformly randomizing, the ratio converges to the lower bound.
\end{theorem}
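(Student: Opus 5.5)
The lower bound follows by combining Theorem~\ref{thm:no-regret-efficiency} with Lemma~\ref{lem:opt_bound}. For any instance with total odds $\rho$, Theorem~\ref{thm:no-regret-efficiency} gives $\alg{ALG} \geq \frac{\rho}{1+\rho}$, and Lemma~\ref{lem:opt_bound} gives $\alg{OPT} \leq 1-e^{-\rho}$. Since both bounds depend only on $\rho$, we get
\[
\frac{\alg{ALG}}{\alg{OPT}} \geq f(\rho) := \frac{\rho/(1+\rho)}{1-e^{-\rho}} \geq \min_{\rho>0} f(\rho) = f(\rho^*).
\]
It remains to justify that the minimum over $\rho>0$ exists and is attained at $\rho^*$.

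For the existence of the minimum, $f(\rho)\to 1$ both as $\rho\to 0^+$ (numerator $\sim\rho$, denominator $\sim\rho$) and as $\rho\to\infty$. For the location of the critical point, I would differentiate $\log f = \log\rho - \log(1+\rho) - \log(1-e^{-\rho})$. Setting the derivative to zero gives
\[
\frac{1}{\rho(1+\rho)} = \frac{e^{-\rho}}{1-e^{-\rho}} = \frac{1}{e^\rho-1},
\]
that is, $e^\rho = \rho^2+\rho+1$. This equation has a unique positive root, since $g(\rho)=e^\rho-\rho^2-\rho-1$ satisfies $g(0)=0$, $g'(0)=0$, $g''(\rho)=e^\rho-2$; so $g$ first dips negative and then is convex and increasing, crossing zero once. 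Numerically $\rho^*\approx 1.79$ and $f(\rho^*)\approx 0.77$.

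For tightness, take $m$ identical arms with $\mu=\rho^*/(m+\rho^*)$. Each arm then has odds $\rho_j = \mu/(1-\mu) = \rho^*/m$, so the total odds is exactly $\rho^*$. The uniform distribution coincides with the normalized-odds distribution, so by Theorem~\ref{thm:stationary_distribution} the uniform learner is no-regret and attains exactly $\frac{\rho^*}{1+\rho^*}$. By Remark~\ref{rem:identical_arms}, $\alg{OPT} = 1-(1-\mu)^m$. As $m\to\infty$,
\[
(1-\mu)^m = \Bigl(\frac{m}{m+\rho^*}\Bigr)^m = \Bigl(1+\frac{\rho^*}{m}\Bigr)^{-m} \to e^{-\rho^*},
\]
so the ratio converges to $f(\rho^*)$.

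The main obstacle I expect is the calculus step: verifying that the critical point is a global minimum and that the root is unique. This is routine given the boundary limits. The substantive work, namely the optimality of Round Robin for identical arms, is already supplied by the remark.
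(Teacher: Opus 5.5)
Your proposal is correct and follows the paper's argument. The lower bound combines Theorem~\ref{thm:no-regret-efficiency} with Lemma~\ref{lem:opt_bound} and minimizes over $\rho$; tightness uses identical arms with total odds $\rho^*$, Theorem~\ref{thm:stationary_distribution}, the optimality of Round Robin, and $(1+\rho^*/m)^{-m}\to e^{-\rho^*}$. Your calculus check that $e^\rho=\rho^2+\rho+1$ has a unique positive root, which is the global minimizer, fills in a step the paper simply asserts.
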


\begin{proof}[Proof of Tightness]
    Consider $m$ arms with identical rate $\mu = \frac{\rho^*}{m+\rho^*}$.
    The sum of odds is exactly $\rho^*$.
    By Theorem \ref{thm:stationary_distribution}, the algorithm that uniformly randomizes has no regret and gets a long-run average utility $\alg{ALG} = \frac{\rho^*}{1+\rho^*}$.
    We also know that with identical arms Round Robin is optimal, thus $\alg{OPT} = 1 - (1-\frac{\rho^*}{m+\rho^*})^m$.
    Since $(1-\frac{\rho^*}{m+\rho^*})^m = (1 + \frac{\rho^*}{m})^{-m}$ converges to $e^{-\rho^*}$ when $m \to \infty$, we conclude
    \[\frac{\alg{ALG}}{\alg{OPT}}
    = \frac{\frac{\rho^*}{1+\rho^*}}{1-(1+\frac{\rho^*}{m})^{-m}} \to \frac{\frac{\rho^*}{1+\rho^*}}{1-e^{-\rho^*}}. \qedhere\]
\end{proof}

Equivalently, we can express this guarantee as a \emph{price of learning} of at most $1.29$, which measures the multiplicative cost of running learning algorithms.
In practical terms, this price sheds light on the extent to which it is worth pursuing optimality rather than learning while playing.

\begin{remark}
Beyond classical learning algorithms, the greedy algorithm also satisfies the no-regret property.
Hence, greedy is at least $0.77$-competitive.
\end{remark}

\section{Stability of Queuing Systems} \label{sec:queuing_system}

We now consider the multi-agent setting, the queuing model introduced by \citet{abel2026learning}.
We give a tight characterization of the overprovisioning required for stability under no-regret learning.
Suppose every queue has arrival rate at most $\lambda_{\max}$.
Then an overprovisioning factor of $\frac{1}{1-\lambda_{\max}}$ is necessary and sufficient for stability.
We first prove sufficiency:

\begin{theorem} \label{thm:overprovisioning}
Consider a system of queues that use learning algorithms satisfying the no-regret property with high probability.
If every queue has an arrival rate at most $\lambda_{\max}$ and it holds
\[\sum_{i=1}^n \lambda_i < (1-\lambda_{\max}) \sum_{j=1}^m \mu_j,\]
then the system is stable.
\end{theorem}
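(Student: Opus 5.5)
We follow the potential-function-over-windows approach of \citet{DBLP:conf/sigecom/GaitondeT20,abel2026learning}, but the key new ingredient will be a per-window inequality that trades off idle server time against the regret guarantees. Partition time into windows of length $\period$, chosen large so that each queue's high-probability regret bound $r_\eta(\period)$ is at most $\varepsilon \period$. The goal is a negative drift: when the total queue length $Q$ at the start of a window is large, the expected number of packets cleared in the window exceeds the expected arrivals $\period\sum_i\lambda_i$ by $\Omega(\period)$. Stability then follows from a standard Lyapunov/Foster-type argument with potential $\Phi=\sum_i Q_i$ (or $Q^2$). Since per-window increments are bounded by $\period$, a bounded expected total queue length uniform in time follows.

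Fix a window and call a queue \emph{persistent} if it is nonempty throughout the window. The plan is to compare acceptances with idleness. For each server $j$, let $I_j$ be the number of steps in the window at which $j$'s buffer is empty at the time packets are sent, so $u_t(j)=1$ for every nonempty queue at those steps. The expected number of service completions, equal to acceptances up to boundary terms, is at least $\sum_j \mu_j(\period-I_j)$ in expectation, because a full buffer clears with probability $\mu_j$ each step. On the other hand, each persistent queue $i$, by no regret, receives at least $\max_j I_j - r_\eta(\period)$ acceptances, and hence at least $\frac{\sum_j \mu_j I_j}{\sum_j\mu_j}-\varepsilon\period$. These are two lower bounds on throughput, one increasing and one decreasing in idleness. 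We need to combine them in the right proportion.

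Here is the combination that gives the $1-\lambda_{\max}$ factor. A persistent queue $i$ clears at least $\frac{\sum_j\mu_jI_j}{\sum_j\mu_j}$ of its own packets. If this is at least $\lambda_{\max}\period$, that queue alone drains in expectation. Otherwise $\sum_j\mu_jI_j<\lambda_{\max}\period\sum_j\mu_j$, so total service is at least $(1-\lambda_{\max})\period\sum_j\mu_j>\period\sum_i\lambda_i$, and the whole system drains. Taking the convex combination, or equivalently using the potential $\Phi$ weighted appropriately between the large queue and the total, should give the drift in both cases. I would first try a potential that adds the total length to a multiple of the maximum queue length, and check that both cases give drift $-\Omega(\period)$.

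\textbf{Main obstacle.} The hard part is handling queues that are not persistent, since they may empty out during the window. Those queues contribute arrivals but carry no regret guarantee, and an emptying queue contributes little to the potential while its arrivals still count. The plan is to show that non-persistent queues start the window with at most $\period$ packets each, so large $Q$ forces some queue to be persistent with length $\ge Q/n$. Then in the case split we need only the regret bound for one long queue, together with the fact that small queues contribute at most $O(n\period)$ to the potential. A second issue is the high-probability failure event (probability $\eta$). It costs at most $\period$ in drift, which is absorbed by choosing $\eta$ small relative to the slack $(1-\lambda_{\max})\sum_j\mu_j-\sum_i\lambda_i$. A third issue is that $I_j$ and the counterfactual utilities are random. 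We handle this with the tower-property conditioning used in the proof of Theorem~\ref{thm:no-regret-efficiency}, conditioning on the history at each step.
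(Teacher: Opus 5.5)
Your plan is essentially the paper's proof. It has the same ingredients: periods of length $W$; a regret bound for a queue that stays nonempty all period, charged against a server's idle steps; total service charged against busy steps via $\E(S_j\mid\mathcal F)=\mu_j\E(B_j\mid\mathcal F)$; and the potential $\Phi=N+\mu M$ fed to Theorem~\ref{thm:pemantle}. The coefficient must be exactly $\mu=\sum_j\mu_j$, because that choice makes the idleness terms cancel identically, so no case split is needed. The total queue length alone does not give negative drift from your two bounds.

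Two details need fixing. First, a step should count as idle only if server $j$'s buffer is empty at service time. A server that is empty when packets are sent may accept another queue's packet, and then $u_t(j)=0$ is possible under arbitrary priorities. Second, the regret bound must be applied to the queue that is longest at the \emph{end} of the period, which is persistent once $M\ge 3W$. Because that queue is random, the good event is needed for all $n$ queues simultaneously.
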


The stability proof uses a potential-function argument based on the following theorem of \citet{pemantle1999moment}.
Informally, the theorem states that if a stochastic sequence has negative drift when it grows large and has sufficiently regular increments (bounded moments), then its expectation is bounded by a constant independent of time.

\begin{theorem}[\citet{pemantle1999moment}] \label{thm:pemantle}
Let \(X_0,X_1,\ldots\) be a sequence of nonnegative random variables, and let \(\mathcal F_t\) denote the history up to period \(t\).
Suppose there exist constants \(\alpha,\beta,\theta>0\) and \(p>2\) such that $X_0 \leq \beta$ and, for every \(t\):
\begin{enumerate}
    \item Negative drift: \enspace when \(\{X_t\geq\beta\}\), we have \(\E[X_{t+1}-X_t\mid\mathcal F_t]\leq -\alpha\),
    \item Bounded increment moments: \enspace \(\E[|X_{t+1}-X_t|^p\mid\mathcal F_t]\leq\theta\).
\end{enumerate}
Then, for every \(0<r<p-1\), there exists a constant \(M=M(\alpha,\beta,\theta,p,r)\) independent of \(t\) such that \(\E[X_t^r]\leq M\) for all \(t\).
\end{theorem}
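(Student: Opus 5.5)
The plan is to prove a uniform tail bound of the form
\[\Pr(X_t>\beta+x)\le C\,x^{-(p-1)}\qquad\text{for all } t \text{ and all } x\ge 1,\]
with $C$ depending only on $(\alpha,\beta,\theta,p)$. Integrating this tail, $\E[X_t^r]=\int_0^\infty r y^{r-1}\Pr(X_t>y)\,dy$, gives a constant bound exactly when $r<p-1$, which is the claim.

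To obtain the tail bound, fix $t$ and let $\sigma\le t$ be the last time with $X_\sigma<\beta$. Such a time exists because $X_0\le\beta$; the boundary case $X_0=\beta$ and the case where $X_t<\beta$ itself are trivial. On the event $\{\sigma=s\}$, every time $u\in(s,t)$ has $X_u\ge\beta$, so the negative drift condition applies along the whole excursion. Write $\Delta_u=X_{u+1}-X_u=D_u+\xi_u$, where $D_u=\E[\Delta_u\mid\mathcal F_u]$ and $\xi_u$ is a martingale difference. Then on $\{\sigma=s\}$, with $k=t-s$,
\[X_t-\beta\;\le\;\Delta_s+\sum_{u=s+1}^{t-1}\xi_u-\alpha(k-1).\]
Since $\Delta_s$ is itself controlled by the moment condition, it suffices to bound, uniformly in $s$, the probability that this expression exceeds $x$, and then sum over $k\ge1$.

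The main obstacle is getting the exponent $p-1$ rather than the weaker $p/2-1$. A direct Markov bound combined with Burkholder's inequality, $\E|\sum\xi_u|^p\lesssim\theta k^{p/2}$, summed over $k$ only gives $x^{1-p/2}$. I will instead use a \emph{big-jump decomposition}. Truncate each increment at time $s+j$ at level $c(x+\alpha j)$ for a small constant $c$.

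The small (truncated) parts form a martingale with bounded, recentred increments. The recentring error is controlled by $\theta/(x+\alpha j)^{p-1}$, which is summable. A Freedman-type exponential inequality then shows that the truncated sum exceeds $x/2+\alpha k/2$ with probability decaying like $\exp(-c'(x+\alpha k))$ (up to polynomial factors), which is summable in $k$ and negligible against $x^{1-p}$.

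For the big parts, the event requires some increment at time $s+j$ of size at least $c(x+\alpha j)$. By the conditional Markov inequality with the $p$-th moment, this has probability at most $\theta c^{-p}(x+\alpha j)^{-p}$. The key bookkeeping step is to charge this event to the jump itself rather than to the pair $(s,k)$: a single large jump at time $u$ determines a unique excursion, so summing over $j\ge0$ gives
\[\sum_{j\ge0}\theta c^{-p}(x+\alpha j)^{-p}\;\lesssim\;x^{1-p}.\]
I expect this step, keeping the union over the starting time $s$ from costing an extra factor, to be the delicate part. If it fails, my fallback is to apply Doob's maximal inequality to the excursion process $\max_{k}\bigl(\sum\xi_u-\alpha k/2\bigr)$ over dyadic blocks of $k$.
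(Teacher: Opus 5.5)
The paper does not prove this theorem; it quotes it from Pemantle and Rosenthal, so your argument has to stand on its own. Your overall plan is the right one and can reach the sharp exponent $p-1$: a uniform tail bound $\Pr(X_t>\beta+x)\lesssim x^{1-p}$, obtained from a last-exit decomposition at $\sigma$ and a split into big jumps and a truncated martingale. However, the step you flag as delicate fails with the thresholds you chose.

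\textbf{The gap: thresholds measured from the start of the excursion.} You truncate the increment at time $s+j$ at level $c(x+\alpha j)$, with $j$ counted from $s$. Since $\sigma$ is not a stopping time, the conditional Markov inequality only gives
\[\Pr\bigl(\sigma=s,\ \Delta_{s+j}>c(x+\alpha j)\bigr)\le \theta c^{-p}(x+\alpha j)^{-p}\,\Pr\bigl(X_s<\beta,\ X_{s+1},\dots,X_{s+j}\ge\beta\bigr).\]
Summed over $s<t$, the last factor becomes the expected number of excursions in $[0,t]$ of length at least $j$. This is typically of order $t$; already for $j=0$ it counts the visits below $\beta$.

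Charging to the jump does not help. Each of the order-$t$ excursions has its own offsets $j=0,1,2,\dots$, so the sum over jump times $u<t$ of $\theta c^{-p}(x+\alpha(u-s(u)))^{-p}$ can be of order $t\,x^{-p}$, not $x^{1-p}$. Your bound $\sum_{j\ge0}\theta c^{-p}(x+\alpha j)^{-p}$ is correct for one excursion, but it is not uniform over the union of excursions. What is missing is the constraint that the excursion must survive until $t$: a jump of size $cx$ early in a long excursion is harmless unless it also beats the drift accumulated before time $t$.

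\textbf{The fix: index thresholds by the time remaining.} Truncate $\Delta_u$ at $b_u=c(x+\alpha(t-u))$. Take it one-sided, $\bar\Delta_u=\min(\Delta_u,b_u)$; then $\E[\bar\Delta_u\mid\mathcal F_u]\le D_u$, so there is no recentring error at all. Now the big-jump event $\bigcup_{u<t}\{\Delta_u>b_u\}$ no longer depends on $s$. Disjointness of the events $\{\sigma=s\}$ absorbs the union over $s$, and the probability of this event is at most $\theta c^{-p}\sum_{i\ge1}(x+\alpha i)^{-p}\lesssim x^{1-p}$, uniformly in $t$.

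\textbf{The truncated part also needs correcting.} In excursion $(s,k)$ the increments can be as large as $B\approx c(x+\alpha k)$, and the target deviation is $y=(x+\alpha k)/2$. With these parameters, Freedman's Bernstein-form bound only gives the constant $e^{-3/(4c)}$. Exponential decay in $x+\alpha k$ is false in general, since about $1/(2c)$ jumps of size $B$ suffice to reach $y$. Instead, use the Bennett form $(ev/(By))^{y/B}$ with $v=k\theta^{2/p}$. Apply it to the martingale stopped at the first return below $\beta$, which only needs increments bounded above. This gives roughly $(C/(x+\alpha k))^{1/(2c)}$, which is summable in $k$ and $o(x^{1-p})$ once $c<1/(2p)$.

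Your Doob/dyadic fallback would not rescue the sharp exponent. It rests on $p$-th moments of the martingale part, so it reproduces the $k^{p/2}$ growth and the weaker exponent $p/2-1$ that you already identified.
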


We partition time into consecutive windows of length $\period$, and refer to each such window as a \emph{period}.
We apply this theorem to the potential function evaluated at period endpoints, with $r=1$.
Throughout the proof, let \(\mathcal F\) denote an arbitrary history at the beginning of the current period.
Let $N_i$ be the number of packets in queue $i$ at the beginning of the period, and define
\[N=\sum_{i=1}^n N_i, \qquad M = \max_{i\in[n]} N_i,\qquad \lambda=\sum_{i=1}^n \lambda_i,\qquad \mu=\sum_{j=1}^m\mu_j.\]
Thus, $N$ is the total number of packets at the queues, $M$ is the length of the longest queue, and \(\lambda\) and \(\mu\) are the total arrival and service rates, respectively.
We consider the potential $\Phi = N + \mu M$.
We will show that for a long enough period $\period$, this potential measured at period endpoints satisfies the conditions of Theorem~\ref{thm:pemantle}.

The formal proof requires careful treatment of conditioning, concentration errors, and regret terms, so we find it helpful to first present an informal sketch that highlights the main ideas underlying the theorem.

\begin{proof}[Proof Sketch of Theorem~\ref{thm:overprovisioning}]
It is easy to check that the increments of the potential are bounded, so we focus on showing the negative drift. To emphasize the main argument, for this sketch only, we suppress expectation signs and assume perfect concentration and zero regret error. 

When the potential is at least \(3(n+\mu)\period\), a longest queue \(\ell\) starts with \(N_\ell=M\geq 3\period\) packets.
Let $N_i'$, $N'$, and $M'$ denote the queue lengths corresponding to $N_i, N$ and $M$ at the end of the period. %(which is the start of the next period).
If \(i\) is a longest queue at the end of the period with $N'_i = M'$ packets at the end, then \(i\) must have started with at least \(\period\) packets: queue \(\ell\) can lose at most \(\period\) packets during the period, so will have at least $2\period$ at the end, while queue \(i\) can receive at most \(\period\).
Thus, queue \(i\)  has enough packets to send at every step of the period.

We say a server is busy at a given step if it has a packet in its buffer.
Let \(b\) be the fraction of the steps that the least busy server is busy. In the $(1-b)\period$ steps that the least busy server is open, any queue could have sent a packet there and it would be accepted.
So a queue that has packets to send at every step of the period and has zero regret must have cleared at least $(1-b)\period$ packets.
In particular, queue $i$ clears that many packets.
Ignoring concentration errors, every queue receives at most \(\lambda_{\max}\period\) packets during the period.
Thus, the size of the longest queue at the end is
\[M' = N'_i \leq N_i + (\lambda_{\max} + b - 1)\period \leq N_\ell + (\lambda_{\max} + b - 1)\period = M + (\lambda_{\max} + b - 1)\period.\]

In expectation, the queues together receive $\lambda \period$ packets.
Since every server is busy at least as often as the least busy server, server \(j\) clears at least \(b\mu_j \period\) packets in expectation.
Thus, the total number of packets at the end of the time period, ignoring concentration errors and the additive term \(m\) accounting for packets initially in server buffers, satisfies
\[N' \leq N + \lambda \period - b \mu \period = N + (\lambda - b\mu) \period.\]
Therefore, the final potential is
\[\Phi' = N' + \mu M' \leq N + (\lambda-b\mu) \period + \mu \big(M + (\lambda_{\max} + b - 1)\period\big) = \Phi + \big(\lambda - (1-\lambda_{\max})\mu \big)\period.\]
Hence, the drift is negative when $\lambda < (1-\lambda_{\max}) \mu$.
\end{proof}

To turn this sketch into a full proof we need to carefully account for regret and concentration error.
We first bound the change in the length of the longest queue $M$.
For sufficiently large $\period$, we expect the empirical packet arrival rates to be close to their expectations and the regret of each queue's learning algorithm to be sufficiently small.
During a period of length $\period$ and for $\delta>0$, let $\mathcal G_\delta$ denote the event that the following two conditions are satisfied.

\begin{condition}\label{con:arrival}
The number of packets arriving at each queue $i$ during the period of length $\period$ is at most $(\lambda_{\max}+\delta) \period$.
\end{condition}

\begin{condition}\label{con:learning}
The learning algorithm of each queue $i$ accumulates regret at most $\delta \period$ during the period of length $\period$.
\end{condition}

Let $B_j$ be the number of steps server $j$ had a packet in its buffer, and let $B = \min_{j \in [m]} B_j$ be the number of steps the least busy server had a packet in its buffer.
Let $M'$ be the length of the longest queue at the end of the current time period.
\begin{lemma} \label{lem:good_event}
If $M \geq 3\period$ and $\mathcal G_\delta$ occurs, we have $M'-M \le \lambda_{\max} \period-(\period-B)+2\delta \period$.
\end{lemma}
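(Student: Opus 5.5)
The plan is to make the proof sketch of Theorem~\ref{thm:overprovisioning} rigorous for the longest queue, using Conditions~\ref{con:arrival} and~\ref{con:learning} in place of perfect concentration and zero regret. Let $\ell$ be a longest queue at the beginning of the period, so $N_\ell = M \geq 3\period$. Let $i$ be a longest queue at the end of the period, so $N'_i = M'$. The argument has three steps, carried out in order.

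First, I show that queue $i$ is nonempty at every step of the period. Queue $\ell$ loses at most one packet per step, so $N'_\ell \geq M - \period \geq 2\period$. Since $i$ is longest at the end, $N'_i \geq N'_\ell \geq 2\period$. Under Condition~\ref{con:arrival}, queue $i$ gains at most $(\lambda_{\max}+\delta)\period \leq \period$ packets, taking $\delta$ small enough that $\lambda_{\max}+\delta\le 1$. Hence $N_i \geq 2\period - \period = \period$. A queue starting with at least $\period$ packets sends at most one per step, so it cannot empty within the $\period$ steps. Therefore queue $i$ sends a packet at every step, and $u_t(j)$ is the genuine acceptance indicator throughout, not the zero convention for empty queues.

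Second, I lower bound the number of packets cleared by queue $i$. Let $j^*$ be a least busy server, with $B_{j^*} = B$. At each of the $\period - B$ steps in which $j^*$'s buffer is empty, the one-step counterfactual $u_t(j^*)$ for queue $i$ equals $1$. The reason is that an empty buffer accepts a packet sent to it, and $u_t$ is evaluated holding the other queues and the history fixed. The step I expect to need care is exactly here: with arbitrary priorities, if another queue also sends to $j^*$ in that step, the counterfactual packet of $i$ might be rejected. I would handle this by making precise the timing convention that "busy" is measured at the moment packets are sent. Then, in a step where $j^*$ is empty and some other queue sends to it, $j^*$ becomes full; I would count that step toward $B$, or else argue that acceptance is the intended counterfactual in that step. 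Concretely, I plan to define $B_j$ as the number of steps in which $j$ ends the arrival phase with a packet, or equivalently in which $j$ would reject a new packet. In either reading, $u_t(j^*)=1$ on at least $\period - B$ steps, so $\sum_t u_t(j^*) \geq \period - B$. Condition~\ref{con:learning} then gives that queue $i$'s accepted packets number at least $\period - B - \delta\period$. Each accepted packet leaves queue $i$.

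Third, I combine the two bounds. We have
\[M' = N'_i \leq N_i + (\lambda_{\max}+\delta)\period - (\period - B - \delta\period) \leq M + \lambda_{\max}\period - (\period - B) + 2\delta\period,\]
using $N_i \leq M$. This is exactly the claimed inequality. The main obstacle is the second step: aligning the definition of $B_j$ with the counterfactual utility under arbitrary tie-breaking. The rest is bookkeeping.
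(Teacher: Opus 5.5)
Your proposal is correct and follows the paper's proof essentially step for step. The one cosmetic difference is that the paper bounds the arrivals to queue $i$ by the trivial $\period$ (at most one per step) rather than invoking Condition~\ref{con:arrival}, so it needs no smallness of $\delta$ to get $N_i\ge \period$.

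Your concern about the counterfactual under arbitrary priorities is well placed. The paper's ``one at every step in which it was idle'' is justified only under the convention you adopt: a server is busy in a step when it holds a packet after the acceptance phase, which is also the convention under which Lemma~\ref{lem:busyness} holds with equality. However, ``$j$ would reject a new packet'' is not literally equivalent to that convention under arbitrary priorities. This does not affect your proof, since only the implication ``$j^*$ idle $\Rightarrow u_t(j^*)=1$'' is used.
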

\begin{proof}
Let $\ell$ be a longest queue at the beginning of the time window.
We have $N_\ell = M \geq 3\period$.
By the end of the time window, queue $\ell$ clears at most $\period$ packets, so its length at the end is $N'_\ell \geq N_\ell - \period$.
Let $i$ be a longest queue at the end of the time window.
Then $M' = N'_i \geq N'_\ell$.
Since queue $i$ receives at most $\period$ packets during the time window, its length at the beginning of the period satisfies $N_i \geq N'_i - \period$.
Therefore $N_i \geq N'_i - \period \geq N'_\ell - \period \geq N_\ell - 2\period \geq \period$.
Since at most one packet can leave queue $i$ at each step, queue $i$ remains nonempty throughout the time window and can therefore send a packet at every step.

Let $j$ be the server attaining $B_j = B$.
In hindsight, always sending to server $j$ would have resulted in at least $\period-B$ packets accepted, one at every step in which it was idle.
By the no-regret Condition~\ref{con:learning}, queue $i$ clears at least $\period - B - \delta \period$ packets.
By the arrival Condition~\ref{con:arrival}, queue $i$ receives at most $(\lambda_{\max}+\delta) \period$ packets.
Thus,
\[N'_i - N_i \leq (\lambda_{\max} + \delta) \period - (\period - B -\delta \period) = - (1 - \lambda_{\max} - 2\delta) \period + B.\]
Hence, the size of the longest queue at the end is
\[M' = N'_i \leq N_i - (1 - \lambda_{\max} - 2\delta) \period + B \leq M - (1 - \lambda_{\max} - 2\delta) \period + B. \qedhere\]
\end{proof}

We now bound the probability that \(\mathcal G_\delta\) does not occur.
We use Chernoff bounds for Condition~\ref{con:arrival} and assume that the queues use learning algorithms with high-probability regret guarantees for Condition~\ref{con:learning}.

\begin{lemma} \label{lem:good_probability}
Suppose that all queues use learning algorithms with high-probability regret guarantees.
For every \(\delta>0\) and \(\eta>0\), by choosing \(\period\) sufficiently large, the probability that \(\mathcal G_\delta\) does not occur, conditional on $\mathcal{F}$, is at most \(2n\eta\), equivalently, $\Pr(\mathcal G_\delta\mid\mathcal F)\geq 1- 2n\eta$.
\end{lemma}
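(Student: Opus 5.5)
The plan is a union bound over the $2n$ bad events, $n$ for arrivals and $n$ for regret, showing each has conditional probability at most $\eta$ once $\period$ is large enough. Let $A_i$ be the event that Condition~\ref{con:arrival} fails for queue $i$, and $L_i$ the event that Condition~\ref{con:learning} fails for queue $i$. Then the complement of $\mathcal G_\delta$ is $\bigcup_i (A_i \cup L_i)$. It therefore suffices to show $\Pr(A_i \mid \mathcal F) \le \eta$ and $\Pr(L_i \mid \mathcal F)\le \eta$ for every $i$, for all $\period$ beyond some threshold $\period_0(\delta,\eta)$ that does not depend on the history $\mathcal F$.

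\emph{Arrivals.} Arrivals at queue $i$ are i.i.d.\ Bernoulli$(\lambda_i)$ across steps and independent of the past. Hence, conditional on $\mathcal F$, the number of arrivals during the period is Binomial$(\period,\lambda_i)$ with mean $\lambda_i\period\le\lambda_{\max}\period$. A Chernoff--Hoeffding bound gives
\[\Pr\big(\text{arrivals} > (\lambda_{\max}+\delta)\period \mid \mathcal F\big) \le \Pr\big(\text{arrivals}-\lambda_i\period>\delta\period\big)\le e^{-2\delta^2\period}.\]
This is at most $\eta$ once $\period \ge \ln(1/\eta)/(2\delta^2)$, uniformly in $\mathcal F$.

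\emph{Regret.} Apply the high-probability no-regret property with parameter $\eta$. It supplies a function $r_\eta(\period)=o(\period)$ such that, conditional on any history before the period, $\Pr(\Reg_\period\le r_\eta(\period))\ge 1-\eta$ for all sufficiently large $\period$. Since $r_\eta(\period)=o(\period)$, we have $r_\eta(\period)\le\delta\period$ for large $\period$, so $\Pr(L_i\mid\mathcal F)\le\eta$. Two points need care here, and I expect them to be the main (if mild) obstacle.

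First, the definition must be read as applying to a fresh window of length $\period$ started from an arbitrary history, which is how it is stated. The counterfactual utilities $u_t(j)$ are then defined relative to the realized play of the other queues, and any adversarial dependence is already covered because the guarantee holds conditional on any prior history.

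Second, the threshold must be uniform. Among the finitely many queues, possibly with different algorithms, take the maximum of the $n$ thresholds. Together with the Chernoff threshold, this gives a single $\period_0$ independent of $\mathcal F$.

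Finally, the union bound over the $2n$ events gives $\Pr(\mathcal G_\delta^c\mid\mathcal F)\le 2n\eta$ for all $\period\ge\period_0$.
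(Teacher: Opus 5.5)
Your proposal is correct and follows the paper's proof essentially verbatim: a Chernoff bound for the arrival condition, the high-probability regret guarantee (with $r_\eta(\period)\le\delta\period$ for large $\period$) for the learning condition, and a union bound over the $2n$ failure events. Your explicit Hoeffding estimate $e^{-2\delta^2\period}$ and your remark that the threshold on $\period$ must be uniform over histories and queues only make precise what the paper leaves implicit.
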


\begin{proof}
For sufficiently large \(\period\), Chernoff bounds imply that Condition~\ref{con:arrival} fails for any given queue with probability at most \(\eta\). Similarly, the high-probability regret guarantee for each learning algorithm implies that Condition~\ref{con:learning} fails for any given queue with probability at most \(\eta\). A union bound over both conditions and all \(n\) queues gives the result. 
\end{proof}

When \(\mathcal G_\delta\) does not occur, the longest queue can receive at most one packet per time step while clearing none, hence $M'-M\leq \period$.
We now combine the bounds for the two cases to bound the expected change in the length of the longest queue.

\begin{lemma} \label{lem:max_contribution}
On the event $M \geq 3\period$, we have
\[\E(M'-M \mid \mathcal F) \leq -(1 - \lambda_{\max}-2\delta-4n\eta)\period + \E(B\mid \mathcal F).\]
\end{lemma}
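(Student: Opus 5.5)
We split the conditional expectation according to whether $\mathcal G_\delta$ occurs:
\[
\E(M'-M\mid\mathcal F)=\E\big((M'-M)\mathbf 1_{\mathcal G_\delta}\mid\mathcal F\big)+\E\big((M'-M)\mathbf 1_{\mathcal G_\delta^c}\mid\mathcal F\big).
\]
On $\mathcal G_\delta$, with $M\ge 3\period$, Lemma~\ref{lem:good_event} gives $M'-M\le -(1-\lambda_{\max}-2\delta)\period + B$. On the complement, we use the crude bound $M'-M\le \period$ noted just before the statement. The only subtlety is that we must recover $\E(B\mid\mathcal F)$ rather than $\E(B\mathbf 1_{\mathcal G_\delta}\mid\mathcal F)$, and we must collect constants so that the error is exactly $4n\eta\period$.

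Write $c:=1-\lambda_{\max}-2\delta$, which we may assume lies in $[0,1]$ for small $\delta$. On $\mathcal G_\delta$ we have $M'-M\le -c\period + B$. On $\mathcal G_\delta^c$ we write $\period = -c\period + B + \big((1+c)\period - B\big)$. Since $B\ge 0$ and $c\le 1$, we get $(1+c)\period - B\le 2\period$, so on $\mathcal G_\delta^c$
\[
M'-M \le -c\period + B + 2\period .
\]
Hence, on both events,
\[
M'-M\le -c\period + B + 2\period\,\mathbf 1_{\mathcal G_\delta^c}.
\]
Taking conditional expectations and applying Lemma~\ref{lem:good_probability}, which gives $\Pr(\mathcal G_\delta^c\mid\mathcal F)\le 2n\eta$, yields
\[
\E(M'-M\mid\mathcal F)\le -c\period+\E(B\mid\mathcal F)+4n\eta\period,
\]
which is the claimed bound $-(1-\lambda_{\max}-2\delta-4n\eta)\period+\E(B\mid\mathcal F)$.

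The main point to get right is the handling of the $B$ term on the bad event. Bounding $\E(B\mathbf 1_{\mathcal G_\delta})$ by $\E(B)$ is only legitimate because the good-event bound is used in the form "$\le -c\period + B$" and we add $B\ge0$ on the bad event as well. This is precisely why the crude bound $\period$ is inflated to $2\period$, producing the factor $4$ rather than $2$. Everything else is immediate from the two preceding lemmas, with $\period$ chosen large enough for Lemma~\ref{lem:good_probability} to apply.
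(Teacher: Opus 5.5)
Your proof is correct and follows the paper's route: split on $\mathcal G_\delta$, apply Lemma~\ref{lem:good_event} on the good event and the crude bound $M'-M\le\period$ on its complement, then finish with Lemma~\ref{lem:good_probability}. Writing this as the single pointwise inequality $M'-M\le -c\period+B+2\period\,\mathbf 1_{\mathcal G_\delta^c}$ is a tidy way to get $\E(B\mid\mathcal F)$ directly, without passing through $\E(B\mid\mathcal F,\mathcal G_\delta)$.

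One correction to your closing commentary: the extra $\period$ on the bad event is needed to absorb the $-c\period$ term (using $c\le 1$). Adding $B$ there costs nothing, since $B\ge 0$. So the factor $4$ comes from the $c\period$ term, just as in the paper's step $-(1-2n\eta)c\period+2n\eta\period\le -(c-4n\eta)\period$.
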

\begin{proof}
Conditioning on whether $\mathcal G_\delta$ occurs, we use the bound from Lemma~\ref{lem:good_event} when it does and the estimate $M'-M\leq \period$ when it does not.
Together with the probability guarantee from Lemma~\ref{lem:good_probability}, this gives
\[\E(M'-M \mid \mathcal F) \leq -(1-2n\eta)(1 - \lambda_{\max} - 2\delta) \period + (1-2n\eta)\E(B\mid \mathcal F,\mathcal G_\delta) + 2n\eta \period.\]
Since $B\geq 0$ and $\Pr(\mathcal G_\delta\mid\mathcal F)\geq 1-2n\eta$, we have
\[\E(B\mid\mathcal F) \geq \Pr(\mathcal G_\delta\mid\mathcal F) \E(B\mid\mathcal F,\mathcal G_\delta) \geq (1-2n\eta)\E(B\mid\mathcal F,\mathcal G_\delta).\]
We also have \(-(1-2n\eta)(1- \lambda_{\max} -2\delta)\period+2n\eta \period \leq -(1 - \lambda_{\max}-2\delta-4n\eta) \period\).
Combining these inequalities gives the result.
\end{proof}

We now bound the change in the total number of packets \(N\) at the queues.
Since this change depends on how many packets leave the queues, we first relate the number \(S_j\) of packets that server $j$ serves to the number \(B_j\) of steps that it is busy.
The following lemma gives this relation in expectation.

\begin{lemma} \label{lem:busyness}
For each server $j$, we have $\E(S_j \mid \mathcal F) = \mu_j \E(B_j \mid \mathcal F)$.
\end{lemma}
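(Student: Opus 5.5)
We write $S_j$ and $B_j$ as sums of per-step indicators over the period and compare them term by term. Index the steps of the period by $t = 1, \dots, \period$. Let $X_{j,t}$ be the indicator that server $j$ has a packet in its buffer at step $t$, evaluated after the acceptance phase and just before the service attempt. Let $Y_{j,t}$ be the indicator that server $j$ clears a packet at step $t$. Then
\[B_j = \sum_{t=1}^{\period} X_{j,t} \qquad \text{and} \qquad S_j = \sum_{t=1}^{\period} Y_{j,t}.\]
By linearity of conditional expectation, it suffices to show that $\E(Y_{j,t}\mid\mathcal F) = \mu_j \E(X_{j,t}\mid\mathcal F)$ for every $t$.

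For each $t$, let $\mathcal H_t$ denote the history up to step $t$ just before server $j$'s service attempt. This includes $\mathcal F$, all arrivals, all routing choices (including the learners' internal randomness), all acceptances at step $t$, and the outcomes of all earlier service attempts. The variable $X_{j,t}$ is $\mathcal H_t$-measurable. By the model, the service attempt at step $t$ succeeds with probability $\mu_j$, independently of everything that came before. Moreover, a packet can be cleared only if one is present, so $Y_{j,t} = X_{j,t}\,\xi_{j,t}$, where $\xi_{j,t}$ is an independent Bernoulli($\mu_j$) coin. It follows that
\[\E(Y_{j,t}\mid \mathcal H_t) = \mu_j X_{j,t}.\]
Since $\mathcal F \subseteq \mathcal H_t$, the tower property gives $\E(Y_{j,t}\mid\mathcal F) = \mu_j\E(X_{j,t}\mid\mathcal F)$. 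Summing over $t$ yields the lemma.

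The only real point of care is consistency of conventions: what it means for server $j$ to be busy at a step must be the state at the moment of the service attempt, after acceptance. This is exactly when a server can serve. I would check that this matches how $B_j$ is used in Lemma~\ref{lem:good_event}. There, a server idle at the moment packets are sent is one that would accept a packet, so it contributes utility $1$ in hindsight. Under our convention, such an idle server may still count as busy after acceptance if some other queue's packet was accepted at that step. This only increases $B_j$ relative to "idle when sent" steps, so the bound $\period - B$ on hindsight utility remains valid. Equivalently, a step in which the buffer is empty after acceptance is one in which a packet sent there would have been accepted. Beyond this bookkeeping, and the observation that the coins $\xi_{j,t}$ are independent of the learners' choices (which depend only on past feedback), the argument is a routine martingale (optional-sampling-free) computation.
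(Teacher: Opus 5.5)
Your proof is correct and is essentially the paper's argument: you write $S_j$ and $B_j$ as sums of per-step indicators, use that the service attempt at each step succeeds with probability $\mu_j$ given the history up to that attempt, and conclude by the tower property and linearity. Conditioning on the history just before the service attempt, after acceptances, is the right timing and slightly more precise than the paper's phrasing (history at the beginning of step $s$), because a packet accepted during step $s$ can already be served in that same step.
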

\begin{proof}
Fix a server $j$.
For each step $s \in [\period]$ of the period, let $I_{j,s}$ indicate whether server $j$ has a packet in its buffer at that step, and let $X_{j,s}$ indicate whether that packet is served.
Let $\mathcal F_s$ denote the history up to the beginning of step $s$.
Then, $\E(X_{j,s}\mid \mathcal F_s)=\mu_j I_{j,s}$. 
Applying the tower property for $\mathcal F\subseteq \mathcal F_s$, we obtain $\E(X_{j,s}\mid \mathcal F)=\E\!\left(\E(X_{j,s}\mid \mathcal F_s)\mid \mathcal F\right) = \mu_j \E(I_{j,s}\mid \mathcal F)$.
Since $S_j=\sum_{s=1}^\period X_{j,s}$ and $B_j=\sum_{s=1}^\period I_{j,s}$, summing over the $\period$ steps gives $\E(S_j\mid \mathcal F)=\mu_j \E(B_j\mid \mathcal F)$. 
\end{proof}

Let $N'$ be the total number of packets at queues at the end of the time period.

\begin{lemma} \label{lem:total_number_contribution}
\(\displaystyle \E(N'-N \mid \mathcal F)  \leq \lambda \period - \sum_{j=1}^m \mu_j \E(B_j \mid \mathcal F) + m \leq \lambda \period - \mu \E(B \mid \mathcal F) + m.\)
\end{lemma}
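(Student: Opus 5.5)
The plan is a conservation-of-packets argument over the period. Let $A_i$ be the number of packets arriving at queue $i$ during the period, and $D_i$ the number of packets queue $i$ successfully places into some server buffer. Then $N'_i - N_i = A_i - D_i$ exactly, so summing gives $N' - N = \sum_i A_i - \sum_i D_i$. Since arrivals at queue $i$ are Bernoulli with rate $\lambda_i$ in each step, independently of the history, we get $\E(\sum_i A_i \mid \mathcal F) = \lambda \period$. It therefore remains to lower bound the total number $D := \sum_i D_i$ of packets accepted into buffers during the period.

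For this we use the buffers. Every packet served by some server during the period was either already in that server's buffer at the start of the period or was accepted from a queue during the period. Each buffer holds at most one packet, so at most $m$ packets of the first kind are served. Hence $\sum_j S_j \le D + m$, that is, $D \ge \sum_j S_j - m$. Taking conditional expectations and applying Lemma~\ref{lem:busyness} to each server gives
\[
\E(D \mid \mathcal F) \ge \sum_{j=1}^m \mu_j \E(B_j \mid \mathcal F) - m.
\]
Combining this with the arrival count yields the first inequality. Conservation per server could even be written exactly, as accepted minus served equals the change in buffer occupancy, which lies in $[-1,1]$; either form suffices.

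The second inequality is immediate. By definition $B_j \ge B = \min_k B_k$ pointwise, so $\E(B_j\mid\mathcal F)\ge \E(B\mid\mathcal F)$, and summing with weights $\mu_j \ge 0$ gives $\sum_j \mu_j \E(B_j\mid\mathcal F) \ge \mu\,\E(B\mid\mathcal F)$.

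No step is genuinely hard. The only point needing care is the bookkeeping for the additive $m$: we must make sure we count packets leaving the queues (acceptances), not services, and that the initial buffer contents account for the entire discrepancy between the two. We also need all expectations to be conditional on the same $\mathcal F$, so that Lemma~\ref{lem:busyness} applies directly.
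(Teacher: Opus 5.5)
Your proposal is correct and follows essentially the same route as the paper. Both arguments use exact per-queue conservation (arrivals minus acceptances), the bound $\sum_j S_j \le \sum_i D_i + m$ coming from the initial buffer contents, and conditional expectations combined with Lemma~\ref{lem:busyness}; your explicit justification of $\E(\sum_i A_i \mid \mathcal F)=\lambda W$ and of the second inequality via $B_j \ge B$ pointwise fills in steps the paper leaves implicit.
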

\begin{proof}
Let $A_i$ be the number of packets that arrive at queue $i$ and $P_i$ the number of packets from queue $i$ that are accepted by a server.
Then, the change in the number of packets in queue $i$ is $N'_i - N_i = A_i -P_i$.
The servers receive $\sum_{i=1}^n P_i$ packets and can additionally serve at most $m$ packets that were already in their buffers.
Thus, $\sum_{j=1}^m S_j \leq \sum_{i=1}^n P_i + m$.
Hence, the change in the total number of packets at the queues satisfies
\[N'- N = \sum_{i=1}^n A_i - \sum_{i=1}^n P_i \leq \sum_{i=1}^n A_i - \sum_{j=1}^m S_j + m.\]
Taking conditional expectations and applying Lemma~\ref{lem:busyness} gives the result.
\end{proof}

Let $\Delta := (1-\lambda_{\max})\mu-\lambda>0$ denote the capacity slack, and take
\[\delta = \frac{\Delta}{8\mu}, \qquad \text{and} \qquad \eta = \frac{\Delta}{16 n \mu}.\]

\begin{lemma}[Negative Drift] \label{lem:negative_drift}
On the event $\Phi \geq 3(n+\mu)\period$, we have the drift
\[\E(\Phi' - \Phi \mid \mathcal F) \leq - \frac{\Delta}{2} \period + m.\]
\end{lemma}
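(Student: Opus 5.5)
The plan is to combine Lemma~\ref{lem:max_contribution} and Lemma~\ref{lem:total_number_contribution} through $\Phi' - \Phi = (N'-N) + \mu(M'-M)$. The one preliminary step is to make sure the hypothesis $M \geq 3\period$ of Lemma~\ref{lem:max_contribution} holds on the event $\Phi \geq 3(n+\mu)\period$. Since $N = \sum_i N_i \leq nM$, we have $\Phi = N + \mu M \leq (n+\mu)M$. Hence $\Phi \geq 3(n+\mu)\period$ forces $M \geq 3\period$, and both lemmas apply conditional on $\mathcal F$.

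Next I add the two bounds, multiplying the one from Lemma~\ref{lem:max_contribution} by $\mu$:
\[
\E(\Phi'-\Phi\mid\mathcal F) \leq \lambda\period - \mu\E(B\mid\mathcal F) + m - \mu(1-\lambda_{\max}-2\delta-4n\eta)\period + \mu\E(B\mid\mathcal F).
\]
The key point, which mirrors the proof sketch, is that the busyness terms $\mu\E(B\mid\mathcal F)$ cancel exactly. This is why the weight $\mu$ on $M$ in the potential was chosen. We are left with
\[
\E(\Phi'-\Phi\mid\mathcal F) \leq \bigl(\lambda - (1-\lambda_{\max})\mu\bigr)\period + \mu(2\delta + 4n\eta)\period + m = -\Delta\period + \mu(2\delta+4n\eta)\period + m.
\]

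Finally I substitute the chosen parameters. With $\delta = \Delta/(8\mu)$ we get $2\mu\delta = \Delta/4$, and with $\eta = \Delta/(16n\mu)$ we get $4n\mu\eta = \Delta/4$. The error is therefore $\Delta\period/2$, and the drift is at most $-\frac{\Delta}{2}\period + m$, as claimed.

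The only subtlety is that Lemma~\ref{lem:max_contribution} relies on Lemma~\ref{lem:good_probability}, which requires $\period$ to be large enough for these specific $\delta$ and $\eta$. I will state that $\period$ is fixed accordingly, and this is legitimate because $\delta$ and $\eta$ depend only on the system parameters. The cancellation of the $B$ terms is the heart of the argument; once it is in place, everything else is bookkeeping. Note that $-\frac{\Delta}{2}\period + m$ is negative once $\period > 2m/\Delta$, which is the form needed for Theorem~\ref{thm:pemantle}.
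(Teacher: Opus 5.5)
Your proposal is correct and follows the paper's proof. It uses $\Phi \le (n+\mu)M$ to get $M \ge 3\period$, adds Lemma~\ref{lem:total_number_contribution} to $\mu$ times Lemma~\ref{lem:max_contribution} so the $\mu\E(B\mid\mathcal F)$ terms cancel, and substitutes $\delta=\Delta/(8\mu)$ and $\eta=\Delta/(16n\mu)$, with the requirement that $\period$ be large enough deferred to the proof of Theorem~\ref{thm:overprovisioning} just as the paper does.
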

\begin{proof}
Since $\Phi \leq (n+\mu)M$, on the event $\Phi \geq 3(n+\mu)\period$ we have $M \geq 3\period$, so the bound on the increment of $M$ from Lemma~\ref{lem:max_contribution} applies.
Combining it with the bound on the increment of $N$ from Lemma~\ref{lem:total_number_contribution}, the drift of the potential satisfies
\begin{align*}
    \E(\Phi' - \Phi \mid \mathcal F) &= \E(N' - N \mid \mathcal F) + \mu \E(M' - M \mid \mathcal F)\\
    &\leq \lambda \period - \mu \E(B \mid \mathcal F) + m - (1 - \lambda_{\max}-2\delta-4n\eta) \mu \period + \mu \E(B\mid \mathcal F)\\
    &\leq \lambda \period  -(1 - \lambda_{\max}-2\delta-4n\eta) \mu  \period + m.
\end{align*}
By the choice of $\delta$ and $\eta$, we conclude
\(\E(\Phi' - \Phi \mid \mathcal F) \leq - \frac{\Delta}{2}\period + m\).
\end{proof}

We now complete the proof of the theorem.
\begin{proof}[Proof of Theorem~\ref{thm:overprovisioning}]
    Recall the potential function $\Phi = N + \mu M$.
    We first verify that its increments have bounded moments.
    The largest possible increase occurs when each queue receives $\period$ packets and clears none, while the largest possible decrease occurs when each queue clears $\period$ packets and receives none.
    Hence, $|\Phi'-\Phi| \leq (n+\mu)\period$.
    Thus, for any $p>2$ and any history $\mathcal{F}$, the $p$-th moment of the increment of the potential is bounded.

    Choose \(\period\) sufficiently large so that Lemma~\ref{lem:good_probability} applies to our choice of \(\delta\) and \(\eta\), and, in addition, so that \(\period > 2m/\Delta\).
    Then Lemma~\ref{lem:negative_drift} gives a strictly negative drift.
    
    Therefore, by Theorem~\ref{thm:pemantle} for $r=1$, the expected potential at the beginning of each period is bounded by a constant independent of time.
    Since the potential can increase by at most $(n+\mu)\period$ during any period, its expectation is bounded by a constant at every time step.
    Since $N \leq \Phi$, the expected total number of packets remains bounded over time.
\end{proof}

The sufficiency result is both substantially stronger and broader in scope than the guarantee of \citet{abel2026learning}, providing a stability guarantee for arbitrary upper bounds on the arrival rates.
In particular, when $\lambda_{\max}=1/2$, it improves their worst-case upper bound on the overprovisioning factor from $3$ to $2$ and, together with their lower bound of $2$, resolves their open question regarding the exact factor in this case.

We next show that our guarantee is tight for every value of $\lambda_{\max}$. %even for a single queue with $\lambda = \lambda_{\max}$.

\begin{theorem} \label{thm:uniformly_randomizing}
For every $\lambda \in (0,1)$, and every $\varepsilon>0$, there exists a number of identical servers $m$ with service rate
\[\mu = \frac{\lambda}{(1-\lambda) m + \lambda},\]
such that a single queue with arrival rate \(\lambda\) that chooses a server uniformly at random at every step satisfies the no-regret property, yet the system is unstable with an overprovisioning factor of $\frac{1 - \varepsilon}{1-\lambda}$.
\end{theorem}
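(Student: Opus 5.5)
With $m$ identical servers of rate $\mu=\frac{\lambda}{(1-\lambda)m+\lambda}$, the odds are $\rho_j=\frac{\mu}{1-\mu}=\frac{\lambda}{(1-\lambda)m}$. Hence the total odds are $\rho=\frac{\lambda}{1-\lambda}$, and the Hamacher sum is exactly $\frac{\rho}{1+\rho}=\lambda$. Because the servers are identical, the normalized-odds distribution $\prob$ is uniform. Theorem~\ref{thm:stationary_distribution} therefore says that the uniformly randomizing learner is no-regret: its regret vanishes almost surely, in expectation, and with high probability. Under persistent backlog it clears packets at long-run rate exactly $\lambda$.

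\textbf{Overprovisioning.} The total service rate is $m\mu=\frac{\lambda m}{(1-\lambda)m+\lambda}$, which tends to $\frac{\lambda}{1-\lambda}$ as $m\to\infty$. Thus $\lambda<\frac{1-\varepsilon}{1-\lambda}\, m\mu$ is equivalent to $(1-\lambda)m+\lambda<(1-\varepsilon)m$, i.e.\ $(\varepsilon-\lambda)m+\lambda<0$ after rearranging. This fails when $\varepsilon\le\lambda$, so I need to recheck which direction the claim intends. The statement should mean that the system has overprovisioning factor at least $\frac{1-\varepsilon}{1-\lambda}$, i.e.\ $\lambda<\frac{1-\lambda}{1-\varepsilon}\,m\mu$. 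This reduces to $(1-\lambda)m+\lambda<\frac{(1-\lambda)m}{1-\varepsilon}$, which holds once $m>\frac{\lambda(1-\varepsilon)}{\varepsilon(1-\lambda)}$. I will choose $m$ that large, interpreting the factor in the tight-overprovisioning sense defined in the preliminaries.

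\textbf{Instability.} The main obstacle is showing that a queue whose arrival rate equals its maximal clearing rate $\lambda$ is not stable, i.e.\ its expected length is unbounded. I would couple the queue with a dominating-service process. Whenever the queue is nonempty, the system is exactly the Markov chain of Theorem~\ref{thm:stationary_distribution}, where each pull succeeds with stationary probability $\frac{1}{1+\rho}$ of failing. When the queue is empty, the servers only recharge further, which can only help. So I would compare with a fictitious chain where the servers are always being pulled, giving the queue at most the packets accepted by that always-pulling process. The net increment is then arrivals minus acceptances, a stationary ergodic process with mean $\lambda-\lambda=0$ and nondegenerate variance. The queue length dominates the reflected walk $\max_{s\le t}(Z_t-Z_s)$ of this increment, up to a correction: idle steps help the real queue, so I need a direction-correct bound. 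The correction is that after an idle period all arms are at least as charged, but each arm holds at most one packet, so idle time yields at most $m$ extra acceptances per idle period.

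A cleaner route is to avoid the coupling subtlety by using the Lemma~\ref{lem:opt_bound}-style per-arm accounting from the proof of Theorem~\ref{thm:no-regret-efficiency}. The uniform learner's acceptances up to $T$ are at most $\frac{\rho}{1+\rho}\E(\text{nonempty steps})+O(m)$ plus the idle contributions. This yields $\E(Q_T)\ge \Omega(\E(\text{idle steps}))-O(m)$, while a zero-drift recurrent walk has idle time growing like $\sqrt T$. Combining the two would give $\E(Q_T)\to\infty$. By either route, a functional CLT for the Markov-modulated zero-drift walk gives $\E(Q_T)\gtrsim c\sqrt T$, contradicting stability.
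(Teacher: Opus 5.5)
Everything up to instability matches the paper: the odds and Hamacher-sum computation, the appeal to Theorem~\ref{thm:stationary_distribution}, and your threshold $m>\lambda(1-\varepsilon)/(\varepsilon(1-\lambda))$. That threshold is exactly the paper's requirement $m\mu>(1-\varepsilon)\lambda/(1-\lambda)$; drop the false start with the inverted inequality. The paper then concludes instability in one line from ``clearing rate equals arrival rate.'' You are right that idle steps let servers recharge, so this needs an argument, but your sketch does not yet supply one.

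In Route~1, a correction of $m$ \emph{per idle period} is too weak. Relative to the always-pulling chain, the real queue forgoes only about $\lambda$ acceptances per idle step, and idle periods can be short and numerous. In Route~2, ``plus the idle contributions'' is precisely the term that must be controlled. Using that the uniform choice $J_t$ is independent of the server states, the per-arm identity gives $\E P_T=\lambda T-\lambda\sum_{t\le T}\E[(1-e_t)\bar F_t]+O(m)$. Here $e_t$ indicates a nonempty queue and $\bar F_t$ is the fraction of free servers. You would still need to show that servers are free during idle time and that idle time grows; you only assert the latter via recurrence and a functional CLT.

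Your first instinct in Route~1 is in fact correct up to an additive $m$. Couple the real system and the always-pulling chain by feeding both the same choices $J_t$ and the same per-server service coins. By induction, the chain's free set stays inside the real free set. The difference set $D_t$ grows only when the chain accepts at a real idle step, and it loses an element whenever the real queue gets an acceptance the chain does not. Write $P(s,t]$ and $\tilde P(s,t]$ for the acceptances of the two systems in steps $s+1,\dots,t$. Then $P(s,t]\le\tilde P(s,t]+|D_s|\le\tilde P(s,t]+m$ for all $s\le t$, so $Q_t\ge(A_t-\tilde P_t)^+-m$, where $A_t$ counts arrivals. The per-arm identity from the proof of Theorem~\ref{thm:no-regret-efficiency} gives $\E\tilde P_t=\lambda t+O(m)$. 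Since $A_t\sim\mathrm{Bin}(t,\lambda)$ is independent of the chain, $\E(A_t-\tilde P_t)^+\ge\frac12\inf_c\E|A_t-c|-O(m)$, which grows like $\sqrt t$. No functional CLT for a Markov-modulated walk is needed.

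Route~2 can also be finished. One step after an idle step, each server is free with probability at least $\mu$, and the queue stays empty with probability $1-\lambda$. This gives $\E Q_T\ge\lambda(1-\lambda)\mu\,\E[\#\text{idle steps}]-O(m)$. Now suppose $\sup_t\E Q_t\le K$. Markov's inequality and a uniformly positive chance of emptying from level $2K$ within $O(K)$ steps then force linearly many expected idle steps, so $\E Q_T$ would grow linearly, a contradiction.
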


\begin{proof}
    The odds of one of these servers is
    \[\frac{\mu}{1- \mu} = \frac{\frac{\lambda}{(1-\lambda) m + \lambda}}{1- \frac{\lambda}{(1-\lambda) m + \lambda}} = \frac{\lambda}{ (1-\lambda) m + \lambda - \lambda} = \frac{\lambda}{ (1-\lambda) m}.\]
    Hence, the sum of odds of the $m$ identical servers is $\rho = \frac{\lambda}{1-\lambda}$.
    By Theorem~\ref{thm:stationary_distribution}, by uniformly randomizing, the probability of successfully clearing a packet is
    \[\frac{\rho}{1+\rho} = \frac{\frac{\lambda}{1-\lambda}}{1+ \frac{\lambda}{1-\lambda}} = \frac{\lambda}{1- \lambda + \lambda} = \lambda,\]
    equal to the arrival rate of the queue, so the system is unstable.
    
    Meanwhile, the capacity of the system is
    \[m \mu = \frac{m\lambda}{(1-\lambda) m + \lambda} \nearrow \frac{\lambda}{1-\lambda}, \qquad \text{as } m \to \infty.\]
    Take $m$ so that the capacity exceeds $\frac{(1-\varepsilon)\lambda}{1- \lambda}$.
    This system is unstable, and has an overprovisioning factor of $\frac{1 - \varepsilon}{1-\lambda}$.
\end{proof}

\section{Simulations} \label{sec:simulations}

We study the empirical behavior of EXP3~\citep{DBLP:journals/siamcomp/AuerCFS02}, a classical mean-based no-regret algorithm under bandit feedback, in two single-queue settings.
First, under persistent backlog, we observe that the mixed strategy of the learner rapidly approaches the normalized-odds distribution used in our tightness construction in Theorem~\ref{thm:stationary_distribution}, with average utility close to the Hamacher sum guarantee from Theorem~\ref{thm:no-regret-efficiency}.
Thus, although other algorithms may achieve higher utility, EXP3 performs close to the worst case among no-regret algorithms on the instances considered.
Second, in the single-queue setting, we examine the evolution of the expected queue length, highlighting the transient buildup and subsequent decline toward a steady level as the learner adapts, and how this behavior varies with the number of servers and the arrival rate.

In all experiments, the learner uses EXP3 with bandit feedback, initialized at the uniform distribution.
For a simulation horizon $T$, we set the exploration rate to
$\gamma = 1/\sqrt{T}$ and the learning rate to $\eta = \gamma/m$.
We consider two instances: one with $m=3$ and rates $\mu_1=0.5$, $\mu_2=0.1$, and $\mu_3=0.3$, and another with $m=7$ and rates $\mu_1=0.5$, $\mu_2=\mu_3=\mu_4=0.33$, $\mu_5=\mu_6=0.25$, and $\mu_7=0.1$.
These represent recharging rates in the bandit experiments and service rates in the queue experiments.
Their Hamacher sums are approximately $0.6063$ and $0.7650$, respectively.
We present the simulation results in the following two subsections, followed by a comparison with previous empirical findings.

\subsection{Concentration Around Normalized-Odds Distribution}

We simulate each instance for $T=10^8$ steps.
Let $p_t$ denote the learner's mixed strategy at time $t$, and let $\bar p_t =  \frac{1}{t}\sum_{s=1}^t p_s$ denote its running average.
We measure the total variation distance of both distributions from the normalized-odds distribution $\pi$.

For the three-arm instance, $\pi_1 \approx 0.6495$, $\pi_2 \approx 0.0722$, and $\pi_3 \approx 0.2784$.
Figure~\ref{fig:distance_3_servers} shows that the current iterate rapidly approaches $\pi$, after which its distance fluctuates predominantly between $10^{-4}$ and $10^{-2}$.
The running average smooths these fluctuations and ends at a distance of approximately $4.7\times 10^{-4}$ from $\pi$.
This three-arm instance also allows us to visualize the mixed strategy directly in the two-dimensional probability simplex.
Figure~\ref{fig:orbit_3_servers} zooms in on a neighborhood of $\pi$ containing the final ten million iterates.
The trajectory forms a cloud around $\pi$, with no clear directional pattern over time.
These persistent fluctuations are consistent with stochastic updates under a constant learning rate.

\begin{figure}[ht]
    \centering
    \begin{subfigure}{.50\textwidth}
        \centering
        \includegraphics[width=\linewidth]{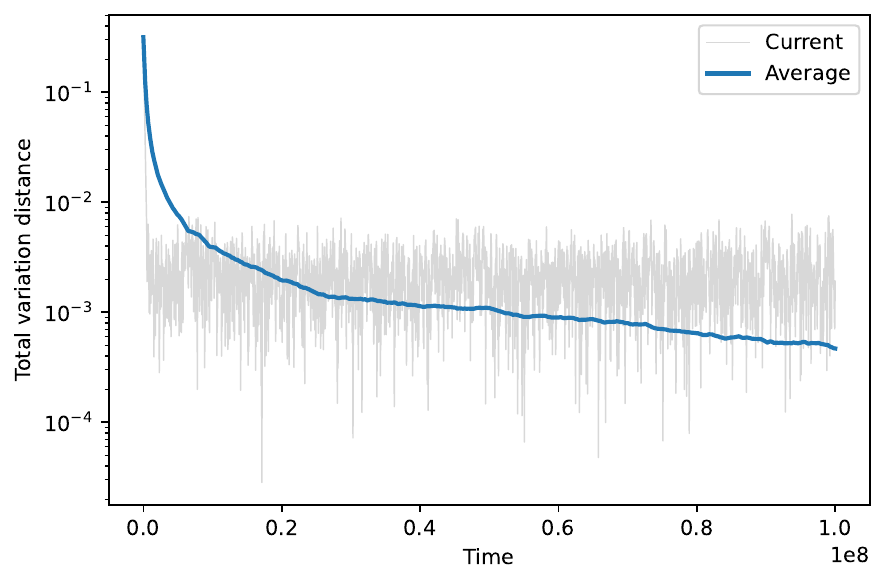}
        \caption{Total variation distance to the normalized-odds distribution $\pi$ for the current EXP3 mixed strategy $p_t$ (gray) and its running average $\bar p_t$ (blue), on a log scale.}
        \label{fig:distance_3_servers}
    \end{subfigure}
    \hfill
    \begin{subfigure}{.48\textwidth}
        \centering
        \includegraphics[width=\linewidth]{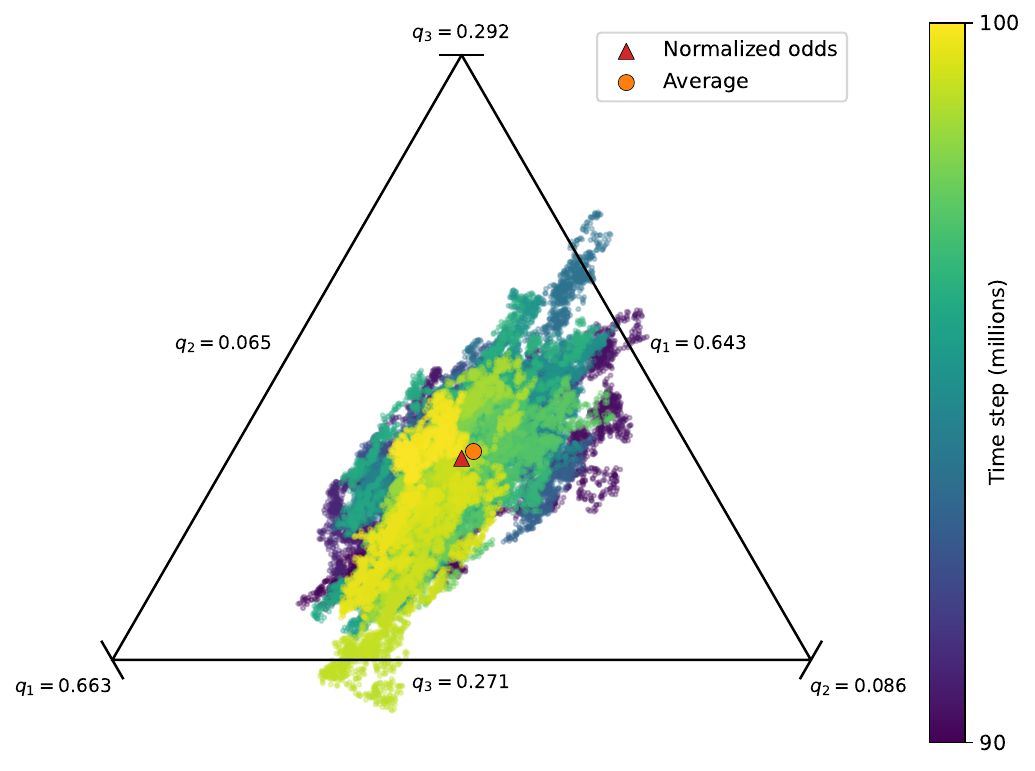}
        \caption{Final $10^7$ mixed strategies, colored by time, in a tiny simplex region around the normalized-odds $\pi$ (red triangle) and average strategy $\bar{p}_T$ (orange circle).}
        \label{fig:orbit_3_servers}
    \end{subfigure}
    \caption{Evolution of the mixed strategy of EXP3 in the three-arm instance.}
    \label{fig:convergence_3_servers}
\end{figure}

The observed average utility is approximately $0.6062$, close to the Hamacher guarantee $h = 0.60625$.
Thus, on this instance, EXP3 approaches both the probability distribution (in last-iterate and average strategy) used in our tightness construction and the corresponding utility.
Although Theorem~\ref{thm:no-regret-efficiency} only guarantees a utility lower bound, EXP3 appears to settle near the stationary distribution attaining it, suggesting that mean-based learning may fail to exploit the temporal structure needed for higher utility.

% 3 arms
% Service rates:                         $[0.5, 0.1, 0.3]$
% Normalized odds:                       $[0.6495, 0.0722, 0.2784]$
% Final sending distribution:            $[0.6508, 0.071,  0.2782]$
% Average sending distribution:          $[0.649,  0.0724, 0.2786]$
% Distance from last to normalized odds: $0.001279635289327452$
% Distance from avg to normalized odds:  $0.00046866341547868035$
% Hamacher sum:                          $0.6063$
% Average utility:                       $0.6062$

For the seven-arm instance, we plot the total variation distance to $\pi$.\footnote{Here, $\pi_1 \approx 0.3072$, $\pi_2=\pi_3=\pi_4 \approx 0.1513$, $\pi_5=\pi_6 \approx 0.1024$, and $\pi_7 \approx 0.0341$.}
Figure~\ref{fig:distance_7_servers} shows qualitatively similar behavior:
the current iterate rapidly approaches $\pi$ and subsequently fluctuates at distances predominantly between $10^{-3}$ and $10^{-2}$.
The running average ends at a distance of approximately $10^{-3}$.
This second instance indicates that the observed behavior is not restricted to the simple three-arm setting.

% 7 arms
% Service rates:                         [0.5  0.33 0.33 0.33 0.25 0.25 0.1 ]
% Normalized odds:                       [0.3072 0.1513 0.1513 0.1513 0.1024 0.1024 0.0341]
% Final sending distribution:            [0.3027 0.1555 0.1525 0.1523 0.1034 0.0991 0.0345]
% Average sending distribution:          [0.3062 0.1515 0.1513 0.1514 0.1024 0.1026 0.0345]
% Distance from last to normalized odds: 0.00781645
% Distance from avg to normalized odds:  0.00099589

\begin{figure}[ht]
    \centering
    \includegraphics[width=0.5\linewidth]{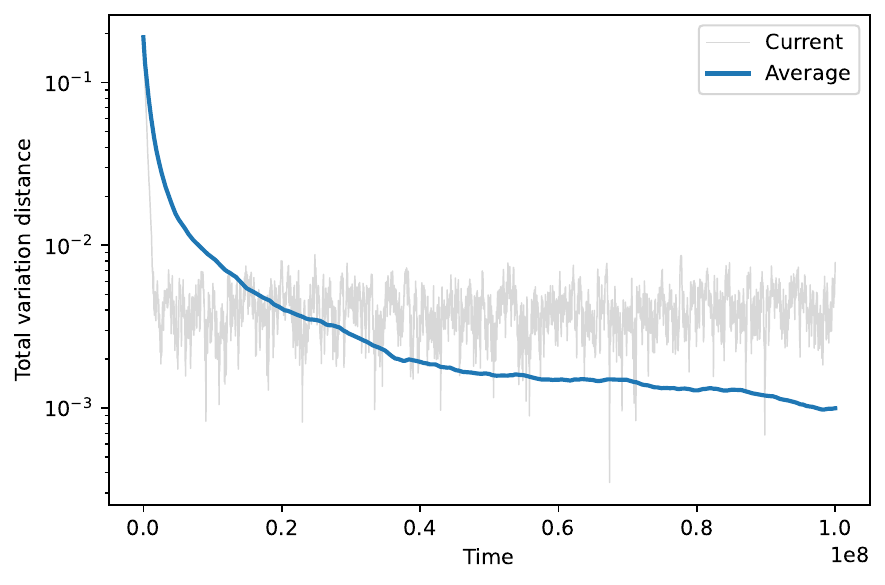}
    \caption{Total variation distance to the normalized-odds distribution $\pi$ for the current EXP3 mixed strategy $p_t$ (gray) and its running average $\bar p_t$ (blue) in the seven-arm instance, on a log scale.}
    \label{fig:distance_7_servers}
\end{figure}

\subsection{Expected Queue Length}

We next examine the queuing behavior associated with arrival rates slightly below the Hamacher guarantee.
For each of the two service-rate vectors, we consider a single queue with arrival rate $\lambda = h-\varepsilon$, where $\varepsilon \in \{0.005,0.01,0.015\}$ and $h$ is the corresponding Hamacher sum.
This gives six configurations, each simulated for $T=10^5$ steps over $N=10^3$ independent runs.

Figure~\ref{fig:queue_length} reports the empirical mean queue length and pointwise 95\% confidence intervals.
These experiments complement our results by examining the initial queue buildup and its subsequent evolution.
In particular, they allow us to compare how the magnitude and duration of the transient behavior vary with the gap $\varepsilon$ between the arrival rate and the utility guarantee.

\begin{figure}[ht]
    \centering
    \includegraphics[width=0.67\linewidth]{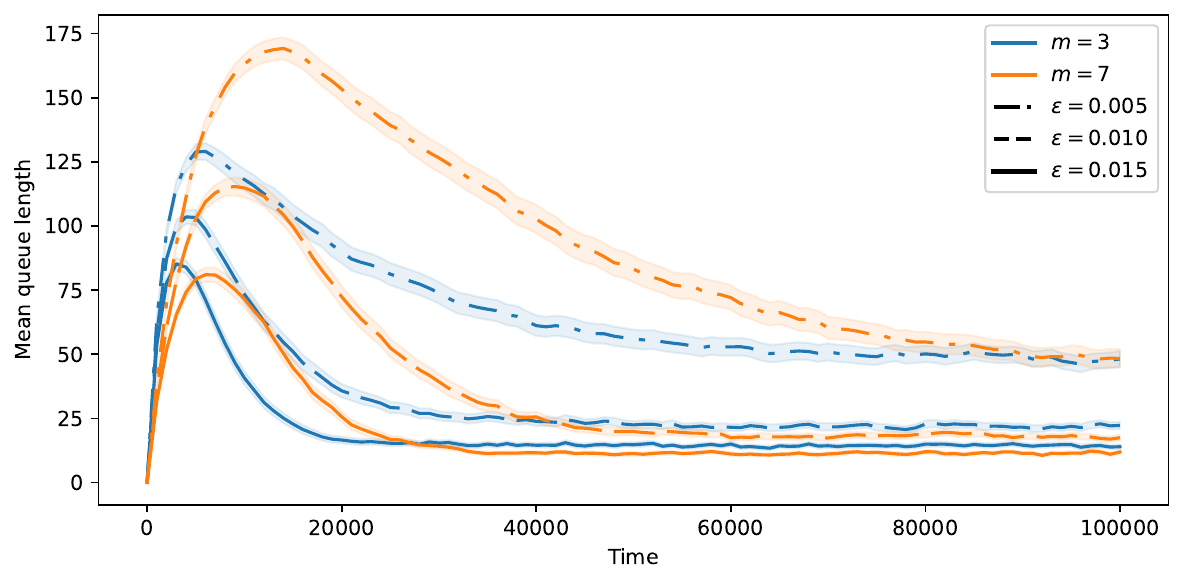}
    \caption{Evolution of the expected queue length for the three- and seven-server instances, with arrival rates $\lambda=h-\varepsilon$ for $\varepsilon\in\{0.005,0.01,0.015\}$, where $h$ is the Hamacher sum of the corresponding service rates.
    Each curve shows the empirical mean over $10^3$ independent runs.
	Shaded regions indicate pointwise 95\% confidence intervals for the mean.}
    \label{fig:queue_length}
\end{figure}

In all six configurations, the mean queue length initially rises sharply and then decreases gradually during a transient phase, before leveling off into a stationary regime with constant expected queue length.
Smaller values of $\varepsilon$ lead to a larger initial buildup and a higher subsequent mean queue length.
The transient phase also lasts longer as $\varepsilon$ decreases.
These observations illustrate how increasing the arrival rate toward the utility guarantee can lead to both greater transient congestion and larger mean queue lengths in the stationary regime.

For a fixed gap between the arrival rate and the utility guarantee, the transient phase lasts longer in the seven-server setting, consistent with the greater difficulty of learning with more servers.
Although the transient peaks differ, the two settings exhibit similar stationary mean queue lengths.
These experiments suggest that learning is harder with more servers, leading to higher transient peaks and longer stabilization times, while stationary congestion appears to depend primarily on the gap to the utility guarantee $\varepsilon$ rather than on the arrival rate itself.

\subsection{Comparison with Related Work} \label{sec:empirical_related_work}

Our experiments exhibit qualitative similarities to those of \citet{DBLP:conf/nips/KrishnasamySJS16,DBLP:journals/ior/KrishnasamySJS21}, who also study a single queue with arrival rate $\lambda$ learning to route packets among multiple servers, but with no buffers.
In their model, server selection is a standard stochastic multi-armed bandit problem, with the goal of identifying the server with the highest service rate $\mu^*$.
The authors measure \emph{queue regret}: the difference between the expected queue length under the learning algorithm and that under the optimal policy, which always selects the server with rate $\mu^*$.
The optimal policy has stationary mean queue length $\lambda(1-\lambda)/(\mu^*-\lambda)$, at most $20$ in their experiments, so their queue-regret curves differ from mean queue-length curves by only a small additive constant.
Their experiments exhibit a similar transient buildup followed by a decline, with higher peaks and longer stabilization times as the gap $\mu^*-\lambda$ narrows, paralleling the dependence on the gap $h-\lambda$ between the Hamacher guarantee and the arrival rate in our experiments.

Recall that all arrival rates in our experiments exceed the maximum individual service rate, $\max_j\mu_j=0.5$, and therefore cannot be stabilized in a single-queue model without server buffers.
The benefit of single-packet buffers was already demonstrated by \citet{abel2026learning}, who studied the same buffered model as ours and whose experiments showed higher packet-clearing rates and reduced queue buildup compared with systems without buffers.
They also observed approximate convergence of EXP3 strategies to stationary distributions in systems with multiple competing queues.
Our convergence experiments exhibit similar behavior for a single learner, with the additional feature that our tightness construction provides a theoretical candidate for both the limiting distribution, $\pi$, and the corresponding utility, $h$, closely matched by the simulations.
Our queue experiments complement these findings by examining the transient and stationary congestion that arises as the arrival rate approaches this sharper utility guarantee.
For multiple queues, our theoretical characterization complements their simulations by proving that an overprovisioning factor of \(2\) suffices when \(\lambda_{\max}\le 1/2\), as their empirical findings had suggested.

\section{Conclusion} \label{sec:conclusion}

We established tight throughput and stability guarantees for no-regret learning in strategic queuing systems with single-packet server buffers. For a single queue, the connection to geometric recharging bandits yields a tight, instance-dependent throughput guarantee and a tight competitive ratio of approximately \(0.77\). Thus, learners that measure regret only against fixed actions nevertheless achieve a constant fraction of the utility of an optimal policy that accounts for recharge dynamics.
For multiple queues, we characterized the tight overprovisioning factor \(1/(1-\lambda_{\max})\) under high-probability no-regret learning.
Our analysis captures the precise tradeoff between service completions when servers are busy and packet acceptance implied by regret bounds when servers are idle.
This closes the gap between the previously known factors of \(2\) and \(3\) when arrival rates are at most \(1/2\), and shows that almost no overprovisioning is required when each queue can send packets much more frequently than new packets arrive.

Together, these results quantify the additional capacity needed to support simple, selfish learners without coordination or explicit scheduling.
Server buffers allow these learners to exploit aggregate service capacity, even though past decisions shape future service opportunities.
The resulting guarantees make precise a tradeoff between resource efficiency and the simplicity of decentralized operation while preserving the stability of the system.

Moving toward more realistic models, a natural direction is to understand how the required overprovisioning scales with buffer size.
With sufficiently large buffers, the necessary condition $\sum_i\lambda_i < \sum_j \mu_j$ becomes essentially sufficient for a centralized system.
How much overprovisioning is needed for no-regret learners with such large but finite buffers?
It would also be interesting to explore how the required overprovisioning depends on network topology and the feedback available to queues. For instance, a queue may need to select an entire path through a network or may be able to query servers about their availability before sending a packet. These extensions introduce new interactions between congestion and learning, raising the question of which guarantees continue to hold and how learners can exploit additional feedback.

Our work contributes to the growing literature on learning in games with carryover effects, where current decisions change rewards in subsequent rounds.
Our results show that no-regret behavior can yield strong guarantees despite these interdependencies, but our tightness example also illustrates its limitations: a stationary policy can have no regret while remaining suboptimal because it fails to exploit recharge dynamics.
This raises the question of which notions of learning and strategic behavior best capture the long-run consequences of agents’ decisions. One direction is to study patient agents that evaluate and commit to strategies based on their long-run utility resulting from the interaction with the system, and to characterize the equilibria and stability of the resulting game.
Another is to develop learning algorithms that explicitly account for carryover effects and establish whether they converge to equilibria or achieve efficient outcomes.
Pursuing these directions could yield guarantees beyond those implied by no regret against fixed actions, both in queuing systems and in other strategic environments with interdependencies between rounds.

\bibliographystyle{plainnat}
\bibliography{refs}

\end{document}